\documentclass[amsfonts,amsmath,floatfix,twocolumn]{revtex4}

\usepackage{graphicx}
\usepackage{epsfig}
\usepackage{epstopdf}
\usepackage[usenames,dvipsnames]{xcolor}
\usepackage{braket}
\usepackage{amsthm, mathtools}
\usepackage{amssymb}
\usepackage{mdframed}
\usepackage{bm}
\usepackage{enumitem}
\usepackage{dsfont}
\usepackage{bbold}
\usepackage{eurosym}
\usepackage{microtype} 
\usepackage{hyperref}

\allowdisplaybreaks

\DeclareMathAlphabet{\mymathbb}{U}{BOONDOX-ds}{m}{n}
\DeclareMathAlphabet{\mathbbm}{U}{bbm}{m}{n}

\DeclareMathOperator{\poly}{poly\,}

\DeclareMathOperator{\Tr}{Tr}

\DeclareMathOperator{\Per}{Per}
\DeclareMathOperator{\Haf}{Haf\,}
\DeclareMathOperator{\lHaf}{lHaf\,}
\DeclareMathOperator{\Det}{Det}

\newcommand{\be}{\begin{equation}}
\newcommand{\ee}{\end{equation}}
\newcommand{\ba}{\begin{aligned}}
\newcommand{\ea}{\end{aligned}}

\newcommand{\bc}{\begin{center}}
\newcommand{\ec}{\end{center}}
\newcommand{\beq}{\begin{equation}}
\newcommand{\eeq}{\end{equation}}
\newcommand{\beqq}{\begin{equation*}}
\newcommand{\eeqq}{\end{equation*}}
\newcommand{\beqa}{\begin{align}}
\newcommand{\eeqa}{\end{align}}
\newcommand{\barr}{\begin{array}}
\newcommand{\earr}{\end{array}}
\newcommand{\bi}{\begin{itemize}}
\newcommand{\ei}{\end{itemize}}

\newcommand{\Prb}{\ensuremath{\,\mathrm{Pr}}}

\newtheorem{lem}{Lemma}

\setcounter{lemma}{1}
\newtheorem{theo}{Theorem}

\newtheorem{coro}{Corollary}
\newtheorem{defi}{Definition}

%--------------------------------------------------------------------------------

\begin{document}

\title{Classical simulation of Gaussian quantum circuits with non-Gaussian input states}

\author{Ulysse Chabaud$^{1,2}$}
\email{ulysse.chabaud@gmail.com}
\author{Giulia Ferrini$^3$}
\author{Fr\'ed\'eric Grosshans$^2$}
\author{Damian Markham$^{2,4}$}
\address{$^1$Universit\'e de Paris, IRIF, CNRS, France}
\address{$^2$Sorbonne Universit\'e, CNRS, LIP6, F-75005 Paris, France}
\address{$^3$Department of Microtechnology and Nanoscience (MC2), Chalmers University of Technology, SE-412 96 Gothenburg, Sweden}
\address{$^4$JFLI, CNRS, National Institute of Informatics, University of Tokyo, Tokyo, Japan}

%--------------------------------------------------------------------------------

\begin{abstract}

We consider Gaussian quantum circuits supplemented with non-Gaussian input states and derive sufficient conditions for efficient classical strong simulation of these circuits. In particular, we generalise the stellar representation of continuous-variable quantum states to the multimode setting and relate the stellar rank of the input non-Gaussian states, a recently introduced measure of non-Gaussianity, to the cost of evaluating classically the output probability densities of these circuits. Our results have consequences for the strong simulability of a large class of near-term continuous-variable quantum circuits. 
 
\end{abstract}

%--------------------------------------------------------------------------------

\maketitle

%--------------------------------------------------------------------------------

\section{Introduction}

\noindent Understanding the origin of quantum advantage is of paramount importance, both at the fundamental and technological level.
Continuous-variable (CV) systems are being recognized as a promising alternative to the use of qubits. On the one hand, unprecedented large CV entangled quantum states, of up to one-million elementary systems, can be deterministically generated~\cite{yokoyama2013optical,Yoshikawa2016}. On the other hand, they offer the potential of increased robustness with respect to noise~\cite{ofek2016}.

Wigner function negativity has been shown to be a necessary resource for quantum advantage with CV quantum computing architectures~\cite{mari2012positive,veitch2013}. 
Since Gaussian states and processes have positive Wigner functions, this necessarily corresponds to the use of non-Gaussian resources.
However, establishing under which conditions non-Gaussianity is also sufficient for quantum advantage~\cite{baragiola2019all}, and when instead non-Gaussian circuits are classically efficiently simulable~\cite{garcia2020efficient}, is still an open question.

In what follows, we analyse the computational power of non-Gaussian states and thus focus on the case where Gaussian circuits and measurements are supplemented with non-Gaussian input states as a computational resource. 
We obtain a classical strong simulation algorithm in the case where the non-Gaussian input state has a bounded support over the Fock basis, which runs in time polynomial in the support size and exponential in the total photon number of the input state. 
Note that any normalised state can be approximated arbitrarily well using states of bounded support simply by considering a renormalised truncation of the state. 

This choice of input non-Gaussian states with bounded support is motivated by the recent characterisation of the structure of non-Gaussian quantum states in the single-mode case using the so-called stellar representation~\cite{chabaud2020stellar}. This characterisation establishes an operational hierarchy of non-Gaussian states: the states of finite stellar rank, i.e., in a finite level of the stellar hierarchy, are the states that can be obtained from the vacuum with a given number of photon additions, together with Gaussian unitary operations.  Alternatively, such states may also be obtained from a state with finite support over the Fock basis---a core state---with a Gaussian unitary operation. In this work, we generalise the stellar representation to the multimode case and relate the cost of classically simulating Gaussian circuits with non-Gaussian input states to the stellar rank of these states. Additionnally, we show that the equivalence between photon addition and core state does not hold in the multimode setting, that is, there exist multimode states with bounded support over the Fock basis that cannot be obtained from the vacuum using a finite number of photon additions and Gaussian unitary operations.

The classical simulation results obtained have consequences for more general CV circuits, since non-Gaussian gates and non-Gaussian measurements can be implemented by Gaussian operations together with non-Gaussian ancillary states~\cite{Gottesman2001,ghose2007non,sabapathy2018states}. In particular, we retrieve the fact that Boson Sampling circuits with a logarithmic number of input photons are strongly simulable~\cite{Aaronson2013}, and we show that Gaussian circuits interleaved with a constant number of photon additions or subtractions can be simulated efficiently classically. These results are reminiscent of similar works in discrete-variable quantum architectures, where Clifford circuits supplemented by few magic states have shown to be classically efficiently simulable~\cite{koh2015further,bu2019efficient}. Additionally, our results allow us to compute the output probability distributions of a wide variety of CV circuits and imply their efficient classical strong simulation when the non-Gaussianity of these circuits is small enough, such as Boson Sampling with unbalanced heterodyne detection~\cite{Chakhmakhchyan2017,chabaud2017continuous}, measurement-based CV circuits~\cite{menicucci2006universal,Menicucci2014} or approximate CVIQP circuits~\cite{douce2017continuous,douce2019probabilistic}, among others.

The rest of the paper is structured as follows. In section~\ref{sec:simu}, we recall notions of classical simulation of quantum computations. In section~\ref{sec:GNGinput}, we define classes of Gaussian circuits with non-Gaussian input states and derive explicit expressions for their output probability densities. In section~\ref{sec:StrongsimuNG}, we obtain a classical algorithm for strong simulation of these circuits, with explicit complexity depending on non-Gaussian parameters of the input state---its support size over the Fock basis and its stellar rank. This allows us to give efficient simulability results when these parameters are small enough with respect to the size of the computation, for various interesting subclasses of Gaussian circuits with non-Gaussian input states. We conclude in section~\ref{sec:conclusion}.

%--------------------------------------------------------------------------------

\section{Classical simulation of quantum computations}
\label{sec:simu}

\noindent Depending on the approach used for simulating classically the functioning of quantum devices, several notions of simulability are commonly used. Hereafter, we recall the notions of strong and weak simulation.

\subsection{Strong simulation}

\noindent To each quantum computation is associated a probability distribution from which classical outcomes are sampled. In the case of continuous-variable quantum computations with continuous outcomes, the output probability distribution is replaced by an output probability density. This motivates the following (informal) definition~\cite{terhal2002classical,pashayan2020estimation}:

\begin{defi}[Strong simulation]\label{Strongs}
A quantum computation is \textit{strongly simulable} if there exists a classical algorithm which evaluates its output probability distribution (density) or any of its marginals for any outcome in time polynomial in the size of the quantum computation.
\end{defi}

\noindent This notion of simulability is referred to as strong because it asks more from the classical simulation algorithm than from the quantum computation: the quantum computation is merely sampling from a probability distribution (density), while the classical algorithm has to compute efficiently the exact probabilities. Various relaxations of this definition are possible, allowing the classical evaluation to be approximate rather than exact, or to abort with a small probability.

\subsection{Weak simulation}

\noindent A sampling counterpart to the notion of strong simulation is to ask the classical simulation algorithm to mimic the output of the quantum computation~\cite{terhal2002classical,pashayan2020estimation}. Informally:

\begin{defi}[Weak simulation]\label{Weaks}
A quantum computation is \textit{weakly simulable} if there exists a classical algorithm which outputs samples from its output probability distribution (density) in time polynomial in the size of the quantum computation.
\end{defi}

\noindent Akin to strong simulation, various relaxations of this definition are possible, allowing the classical sampling to be approximate rather than exact, or to abort with a small probability. Hereafter we only consider the definition above.

In the case of continuous-variable quantum computations with continuous outcomes, a weaker requirement is to ask the classical simulation not to sample from the output probability density, but rather from a discretised probability distribution obtained from the probability density by performing an efficient binning of the sample space. Indeed, samples from the output probability density yield samples of such a discretised probability distribution with efficient classical post-processing.

As it turns out, weak simulation is indeed weaker than strong simulation, as was shown in Refs.~\cite{terhal2002classical,pashayan2020estimation}: an efficient classical algorithm for strong simulation provides an efficient classical algorithm for weak simulation (assuming one can efficiently sample from efficiently computable univariate probability distributions over a polynomial number of samples).
For quantum computations yielding continuous classical outcomes, the result still holds with a similar proof for binned discretised probability distributions rather than the corresponding probability density, as long as the discretised probabilities can be computed efficiently from the probability density and have support on a polynomial number of bins for each mode.

In what follows, we consider strong simulation of a large class of CV quantum circuits: Gaussian circuits with non-Gaussian input states.

%--------------------------------------------------------------------------------

\section{Gaussian circuits with non-Gaussian input states}
\label{sec:GNGinput}

\subsection{The stellar representation}

\noindent The stellar representation of single-mode continuous-variable quantum states has been introduced in~\cite{chabaud2020stellar}. It establishes a hierarchy among single-mode non-Gaussian pure states based on the number of zeros of their Husimi $Q$ function~\cite{husimi1940some}. It shows that any state of stellar rank $N$, that is, whose Husimi $Q$ function has exactly $2N$ zeros (counted with multiplicity), may be written in the following form:
\be
\hat D(\alpha)\hat S(\xi)\ket C,
\label{stellardec}
\ee
where $\hat D(\alpha)$ is a displacement of amplitude $\alpha\in\mathbb C$, $\hat S(\xi)$ is a squeezing of parameter $\xi\in\mathbb C$ and $\ket C$ is a core state of rank $N$, i.e., a state which has bounded support over the Fock basis with highest Fock state $\ket N$. Additionnally, the stellar rank of a pure quantum state corresponds to the minimal number of photon additions that are necessary to engineer the state from the vacuum, together with Gaussian unitary operations. 

Hereafter, we extend a few definitions from~\cite{chabaud2020stellar} to the multimode case, using bold math for multi-index notations (see Appendix~\ref{app:multiindex}).
First, the stellar function, which provides a representation of multimode pure states as multivariate holomorphic functions:

\begin{defi}[Multimode stellar function]
Let $m\in\mathbb N^*$ and let $\ket{\bm\psi}=\sum_{\bm n\ge\bm0}{\psi_{\bm n}\ket{\bm n}}\in\mathcal H^{\otimes m}$ be a normalised pure state over $m$ modes. The \textit{stellar function} of the state $\ket{\bm\psi}$ is defined as
\be
F_{\bm\psi}^\star(\bm z)=e^{\frac12\|\bm z\|^2}\braket{\bm z^*|\bm\psi}=\sum_{\bm n\ge\bm 0}{\frac{\psi_{\bm n}}{\sqrt{\bm n!}}\bm z^{\bm n}},
\label{stellarfuncmulti}
\ee
for all $\bm z\in\mathbb C^m$, where $\ket{\bm z}=e^{-\frac12\|\bm z\|^2}\sum_{\bm n\ge\bm0}{\frac{\bm z^{\bm n}}{\sqrt{\bm n!}}\ket{\bm n}}\in\mathcal H^{\otimes m}$ is the coherent state of complex amplitude $\bm z$.
\end{defi}

\noindent The following definition also extends naturally from the single-mode case:

\begin{defi}[Multimode core state]
\textit{Multimode core states} are defined as the normalised pure quantum states which have a (multivariate) polynomial stellar function.
\end{defi}

\noindent Like in the single-mode case, these are the states with a finite support over the (multimode) Fock basis. For any $m\in\mathbb N^*$, the set of multimode core states over $m$ modes is dense in the set of normalised states for the trace norm (by considering renormalised cutoff states). 
We also introduce the following definitions:

\begin{defi}[Degree of a multimode core state]
The \textit{degree} of a multimode core state is defined as the degree-sum of its stellar function.
\end{defi}

\noindent This definition generalises to the multimode case the notion of stellar rank for core states in the single-mode case.

\begin{defi}[Support of a multimode core state]
The \textit{support} of a multimode core state is the set of Fock basis states which have nonzero overlap with the core state.
\end{defi}

\noindent For example, the $4$-mode core state $\frac1{\sqrt2}(\ket{1230}+\ket{0001})$ has degree $6$, support size $2$, and its stellar function is given by $\frac1{2\sqrt6}z_1z_2^2z_3^3+\frac1{\sqrt2}z_4$, for all $z_1,z_2,z_3,z_4\in\mathbb C^4$. 

Single-mode states of finite stellar rank---which are the states whose stellar function has a finite number of zeros---have two equivalent representations: they are either obtained from an underlying core state by a Gaussian unitary operation, or they can be obtained from the vacuum by photon additions and Gaussian unitary operations. In the multimode setting however, the stellar function has either no zeros or an uncountable infinite number of zeros~\cite{soto1983wigner}. Moreover,  we show in Lemma~\ref{lem:multiGconv} that the two representations are no longer equivalent in the multimode setting: the class of multimode states that are obtained from a multimode core state by a multimode Gaussian unitary operation is striclty larger than the class of multimode states that can be obtained from the vacuum by photon additions and multimode Gaussian unitary operations. Hence, we generalise the notion of finite stellar rank for multimode states as follows:
\begin{defi}[Multimode stellar rank]
Let $\ket{\bm\psi}=\hat G\ket{\bm C}$, where $\hat G$ is a multimode Gaussian unitary and $\ket{\bm C}$ is a multimode core state. The stellar rank of the multimode state $\ket{\bm\psi}$ is defined as the degree of $\ket{\bm C}$.
\end{defi}
\noindent Additionally, for multimode quantum states which do not admit a decomposition of the form $\hat G\ket{\bm C}$, we define their stellar rank to be $+\infty$. 

As in the single-mode case, it is easily seen that the stellar rank is invariant under Gaussian unitary operations (we refer to~\cite{chabaud2020stellar} for the proofs in the single-mode case).  Similarly, the notion of multimode stellar rank induces a multimode stellar hierachy which is robust with respect to the trace norm. In what follows, we only consider multimode states of finite stellar rank, which form a dense subset of the multimode Hilbert space (since these include the set of multimode core states). 

We now turn to the analysis of the computational power of multimode non-Gaussian states. We consider Gaussian unitary circuits with multimode states of finite stellar rank in input. Since these states are of the form $\hat G\ket{\bm C}$, where $\hat G$ is a given multimode Gaussian unitary and $\ket{\bm C}$ is a given multimode core state, this is equivalent to consider Gaussian unitary circuits with input core states.
In what follows, we show that the degree and support size of a multimode core state are sufficient to quantify the hardness of strongly simulating Gaussian circuits with input core states.

\subsection{$G_{\text{core}}$ circuits}

\noindent We define $G_{\text{core}}$ circuits as the family of Gaussian circuits with Gaussian measurements, supplemented by non-Gaussian multimode core states in the input.

Measuring a state with unbalanced heterodyne detection effectively amounts to squeezing the state and then sampling from its $Q$ function~\cite{chabaud2017continuous}, with a squeezing parameter $\xi\in\mathbb C$ depending on the unbalancing of the detection. Setting $\xi=0$ yields balanced heterodyne detection, while sending $|\xi|=r$ to infinity yields homodyne detection. Any Gaussian measurement can thus be implemented by Gaussian unitary operations and heterodyne detection only, since it can be implemented by Gaussian unitary operations and homodyne detection only~\cite{giedke2002characterization,eisert2003introduction}. Without loss of generality, a Gaussian measurement may thus be written as a tensor product of single-mode balanced heterodyne detections preceded by a Gaussian unitary. $G_{\text{core}}$ circuits are then described by two (multidimensional) parameters: a multimode core state $\ket{\bm C}$ in the input and a Gaussian unitary evolution $\hat G$ (Fig.~\ref{fig:GCI}).
\begin{figure}
\begin{center}
\includegraphics[width=0.7\columnwidth]{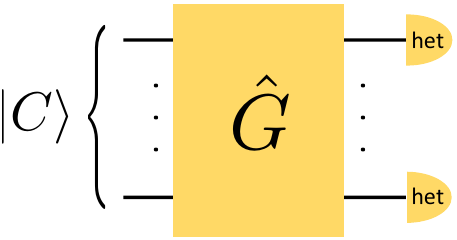}
\caption{Representation of a $G_{\text{core}}$ circuit with multimode core state input $\ket{\bm C}$. The unitary $\hat G$ is Gaussian and the measurement is performed by heterodyne detection.}
\label{fig:GCI}
\end{center}
\end{figure}

In what follows, we derive a general expression for the output probability density functions of these circuits. Then, we study the classical simulability of $G_{\text{core}}$ circuits and of various notable subclasses of these circuits.

\subsection{Output probability density of $G_{\text{core}}$ circuits}

\noindent We first recall a few combinatorial functions related to the permanent, which appear in the expressions of the output probability densities. The hafnian of a square matrix $A=(a_{ij})_{1\le i,j\le2m}$ of size $2m$ is defined as~\cite{caianiello1953quantum}
\begin{equation}
\Haf\,(A):=\sum_{M\in\text{PMP}\,(2m)}{\prod_{\{i,j\}\in M}{a_{ij}}},
\label{Hafnian}
\end{equation}
where the sum is over the perfect matchings of the set $\{1,\dots,2m\}$, i.e., the partitions of $\{1,\dots,2m\}$ in subsets of size $2$. The hafnian of a matrix of odd size is $0$. The hafnian is related to the permanent by
\be
\Haf\begin{pmatrix} \mymathbb0_m & B \\ B^T & \mymathbb0_m \end{pmatrix}=\Per\,(B),
\label{hafper}
\ee
for any $m\times m$ matrix $B$. By convention we set $\Haf\,(\emptyset)=1$, where $\emptyset$ is a square matrix of size $0$.

The loop hafnian of a square matrix $R=(r_{ij})_{1\le i,j\le r}$ of size $r$ is defined as~\cite{bjorklund2019faster}
\be
\lHaf(R):=\sum_{M\in\text{SMP}\,(r)}{\prod_{\{i,j\}\in M}{r_{ij}}},
\label{lHaf}
\ee
where the sum is over the single pair matchings of the set $\{1,\dots,r\}$, defined as the set of perfect matchings of a complete graph with loops with $r$ vertices. This set is isomorphic to the set $\Pi_{1,2}(\{1,\dots,r\})$ of partitions of $\{1,\dots,r\}$ in subsets of size $1$ and $2$ (by mapping a block $\{k\}$ of size $1$ of a partition to the matching $\{k,k\}$ and a block $\{i,j\}$ of size $2$ to the matching $\{i,j\}$). In particular, when $R$ is a matrix whose diagonal entries are all $0$, we have $\lHaf(R)=\Haf\,(R)$. The loop hafnian of a matrix of size $r$ may be computed in time $O(r^32^{r/2})$~\cite{bjorklund2019faster}.

We obtain a closed expression for the output probability density of Gaussian circuits with multimode core states input in Theorem~\ref{th:Pr}, by adapting proof techniques from~\cite{Hamilton2016,kruse2019detailed,quesada2019franck} (see Appendix~\ref{app:multiindex} for multi-index notations):

\begin{theo}\label{th:Pr}
Let $m,n\in\mathbb N$ and let
\be
\ket{\bm C}=\sum_{\substack{\bm p\in\mathbb N^m\\|\bm p|\le n}}{c_{\bm p}\ket{\bm p}},
\ee
be an $m$-mode core state of degree $n$. Let $\hat G$ be a Gaussian unitary over $m$ modes. For all $\bm\alpha\in\mathbb C^m$, let us write $\bm V$ and $\bm{\tilde d}=(\bm d,\bm d^*)$ the covariance matrix and the displacement vector of the Gaussian state $\hat G^\dag\ket{\bm\alpha}$. Then, the output probability density for the $G_{\text{core}}$ circuit $\hat G$ with input $\ket{\bm C}$ and heterodyne detection, evaluated at $\bm\alpha$, is given by
\be
\Prb_{\text{core}}[\bm\alpha]=\kappa(\bm\alpha,\hat G)\!\!\!\!\sum_{\substack{\bm p,\bm q\in\mathbb N^m\\|\bm p|\le n,|\bm q|\le n}}{\!\!\!\!\frac{(-1)^{|\bm p|+|\bm q|}}{\sqrt{\bm p!\bm q!}}c_{\bm p}c_{\bm q}^*\lHaf\left(A_{\bm p,\bm q}\right)},
\label{Pr}
\ee
where
\be
\kappa(\bm\alpha,\hat G)=\frac{\exp\left[-\frac12\bm{\tilde d}^\dag\left(\bm V+\mathbb1_{2m}/2\right)^{-1}\bm{\tilde d}\right]}{\pi^m\sqrt{\Det\,(\bm V+\mathbb1_{2m}/2)}}
\ee
is a Gaussian prefactor and where $A_{\bm p,\bm q}$ is the square matrix of size $|\bm p|+|\bm q|$ obtained from
\be
V=\begin{pmatrix}\mymathbb0_m & \mathbbm1_m \\ \mathbb1_m & \mymathbb0_m \end{pmatrix}\left[\mathbb1_{2m}-\left(\bm V+\mathbb1_{2m}/2\right)^{-1}\right]
\ee
and
\be
D=\left[\bm{\tilde d}^\dag\left(\bm V+\mathbb1_{2m}/2\right)^{-1}\right]^T,
\ee
by replacing the diagonal of $V$ by the elements of $D$, then by repeating $p_k$ times the $k^{th}$ row and column and $q_k$ times the $(m+k)^{th}$ row and column of the obtained matrix for all $k\in\{1,\dots,m\}$.
\end{theo}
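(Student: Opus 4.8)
The plan is to compute $\Prb_{\text{core}}[\bm\alpha] = |\braket{\bm\alpha|\hat G|\bm C}|^2$ by expanding $\ket{\bm C}$ in the Fock basis and reducing each term $\braket{\bm\alpha|\hat G|\bm p}$ to a derivative of a Gaussian overlap, following the Franck–Condon / threshold-detector machinery of~\cite{Hamilton2016,kruse2019detailed,quesada2019franck}. First I would write $\Prb_{\text{core}}[\bm\alpha] = \sum_{\bm p, \bm q} c_{\bm p} c_{\bm q}^* \braket{\bm\alpha|\hat G|\bm p}\braket{\bm p^{\phantom{*}}}$—more precisely $\braket{\bm\alpha|\hat G|\bm p}\overline{\braket{\bm\alpha|\hat G|\bm q}}$—and move $\hat G^\dag$ onto the bra, so that each cross term becomes $\braket{\bm p|\psi_{\bm G}}\braket{\psi_{\bm G}|\bm q}$ where $\ket{\psi_{\bm G}} = \hat G^\dag\ket{\bm\alpha}$ is the Gaussian (in general displaced, squeezed, entangled) pure state whose covariance matrix $\bm V$ and displacement $\bm{\tilde d}$ are given. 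The Fock amplitudes $\braket{\bm p|\psi_{\bm G}}$ of a Gaussian state are standard: they are extracted from the characteristic/Husimi function of $\ket{\psi_{\bm G}}$ by differentiation, i.e. $\braket{\bm p|\psi_{\bm G}} = \frac{1}{\sqrt{\bm p!}}\,\partial_{\bm z}^{\bm p}\big[e^{\frac12\|\bm z\|^2}\braket{\bm z^*|\psi_{\bm G}}\big]_{\bm z = 0}$, which is exactly the stellar function $F^\star_{\psi_{\bm G}}$ of the earlier definition evaluated via Taylor coefficients. For a Gaussian state this stellar function is the exponential of a quadratic-plus-linear form in $\bm z$, whose coefficients are read off from $\bm V$ and $\bm{\tilde d}$.

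The key computational step is then to collect the two stellar functions (one holomorphic in the "$\bm p$" variables, one antiholomorphic, which after conjugation becomes holomorphic in a second block of variables "$\bm q$") into a single Gaussian integral / generating function of the form $\exp\big(\tfrac12 \bm w^T X \bm w + Y^T \bm w\big)$ in the $2m$ variables $\bm w = (\bm z_{\bm p}, \bm z_{\bm q})$, with the product of the two Gaussian normalisations producing precisely the prefactor $\kappa(\bm\alpha,\hat G)$ after one combines the two covariance matrices into $\bm V + \mathbb1_{2m}/2$ (the extra $\mathbb1/2$ being the vacuum/heterodyne POVM contribution, i.e. the $Q$-function smoothing). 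Extracting the Taylor coefficient indexed by $(\bm p, \bm q)$ of such an exponential generating function is a classical combinatorial identity: $\partial_{\bm w}^{(\bm p, \bm q)}\exp(\tfrac12 \bm w^T X \bm w + Y^T \bm w)\big|_{0}$ equals the loop hafnian of the matrix obtained from $X$ by placing $Y$ on the diagonal and then repeating rows/columns according to the multi-index $(\bm p,\bm q)$. This is exactly the combinatorial definition of $\lHaf$ recalled in the excerpt (single-pair matchings $=$ matchings using the off-diagonal $X$ entries for the pairs and the diagonal $Y$ entries for the singletons), so once $X$ is identified with the matrix $V$ of the statement and $Y$ with $D$, the formula \eqref{Pr} drops out, including the $\frac{(-1)^{|\bm p|+|\bm q|}}{\sqrt{\bm p!\bm q!}}$ bookkeeping factors that come from the $\frac{1}{\sqrt{\bm p!}}$ normalisation of Fock amplitudes and from a sign absorbed when rewriting $\mathbb1_{2m} - (\bm V+\mathbb1/2)^{-1}$.

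The main obstacle I anticipate is purely one of careful bookkeeping rather than conceptual depth: getting the conjugations, the ordering of the $(\bm z, \bm z^*)$ blocks, the symplectic conventions for $\bm V$ and $\bm{\tilde d} = (\bm d, \bm d^*)$, and the factor-of-$2$ in the heterodyne POVM all mutually consistent, so that the combined $2m\times 2m$ quadratic form is exactly $V$ and the linear part is exactly $D$ with the stated signs. In particular, one must verify that the off-diagonal block structure of $V$ (the $\begin{pmatrix}\mymathbb0 & \mathbbm1 \\ \mathbb1 & \mymathbb0\end{pmatrix}$ swap times $[\mathbb1 - (\bm V+\mathbb1/2)^{-1}]$) correctly encodes both the "within-$\bm p$" contractions, the "within-$\bm q$" contractions, and the "$\bm p$–$\bm q$" cross contractions of the two Gaussian exponents, and that the repetition-of-rows/columns prescription matches the multinomial expansion of $\bm z^{\bm p}$. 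I would handle this by first doing the single-mode case $m=1$ explicitly to fix all conventions, then promoting to multi-index notation; the finite support $|\bm p|, |\bm q| \le n$ guarantees all sums are finite so no convergence issues arise, and the result is an exact closed form.
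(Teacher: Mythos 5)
Your proposal is correct and arrives at the right formula, but it takes a genuinely different intermediate route from the paper's. The paper writes $\Prb_{\text{core}}[\bm\alpha]=\frac1{\pi^m}\Tr\left[\hat G^\dag\ket{\bm\alpha}\!\bra{\bm\alpha}\hat G\ket{\bm C}\!\bra{\bm C}\right]$ as a phase-space overlap $\int Q\cdot P$, represents $P_{\ket{\bm p}\!\bra{\bm q}}$ as $e^{\|\bm\beta\|^2}$ times derivatives of delta functions, integrates by parts, and then uses Fa\`a di Bruno's formula to identify the surviving derivatives of $\exp\left[\frac12\bm{\tilde\beta}^TV\bm{\tilde\beta}+D^T\bm{\tilde\beta}\right]$ at the origin with a loop hafnian (this is the content of the paper's intermediate Lemma on $T_{\bm p,\bm q}(V,D)$). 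You bypass the Glauber--Sudarshan $P$ function and all distributional manipulations by reading the Fock amplitudes $\braket{\bm p|\hat G^\dag|\bm\alpha}$ off as Taylor coefficients of the stellar (Bargmann) function of the pure Gaussian state $\hat G^\dag\ket{\bm\alpha}$, which is an exponential of a quadratic-plus-linear form, and then invoking the same Taylor-coefficient-equals-loop-hafnian identity; since after integration by parts $e^{\|\bm\beta\|^2}Q(\bm\beta)$ is precisely $\frac1{\pi^m}|F^\star(\bm\beta^*)|^2$, the two derivations coincide at the final combinatorial step, and yours is arguably cleaner and closer to the paper's own stellar formalism. Two bookkeeping caveats you should be explicit about. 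First, because $\hat G^\dag\ket{\bm\alpha}$ is pure, your generating function in the doubled variables $(\bm z_{\bm p},\bm z_{\bm q})$ has no cross block, so your loop hafnian factorises as a product of two loop hafnians of sizes $|\bm p|$ and $|\bm q|$; to recover the theorem exactly as stated you must additionally verify that the stated matrix $V$ is block-diagonal in this sense when $\bm V=\frac12 SS^\dag$ is a pure-state covariance matrix (a standard fact in Gaussian Boson Sampling, but an extra step your route requires, whereas the $Q$--$P$ route produces the full $2m\times2m$ matrix directly and also applies verbatim to the mixed conditional Gaussian states that appear in the marginal computation of Theorem~\ref{th:strong}). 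Second, the prefactor $(-1)^{|\bm p|+|\bm q|}$ cannot be absorbed by negating the whole matrix, because single-pair matchings of a set of size $r$ have anywhere between $r/2$ and $r$ blocks; it is only absorbable by negating the diagonal vector $D$ alone, so your remark about a ``sign absorbed when rewriting $\mathbb1_{2m}-(\bm V+\mathbb1_{2m}/2)^{-1}$'' needs that more careful justification.
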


\begin{figure}
\begin{center}
\includegraphics[width=0.7\columnwidth]{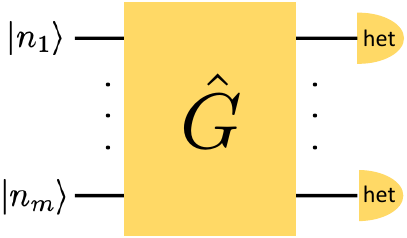}
\caption{Representation of a $G_{\text{Fock}}$ circuit with multimode Fock state input $\ket{n_1\dots n_m}$. The unitary $\hat G$ is Gaussian and the measurement is performed by heterodyne detection.}
\label{fig:GFock}
\end{center}
\end{figure}

\noindent We give a proof in Appendix~\ref{app:thPr}.
When the input core state is a multimode Fock state, we refer to the corresponding subclass of $G_{\text{core}}$ circuits as $G_{\text{Fock}}$ circuits (see Fig.~\ref{fig:GFock}). In that case, the sum in Eq.~(\ref{Pr}) reduces to a single term and we obtain the following result:

\begin{coro}\label{coro:PrFock}
Let $m,n\in\mathbb N$ and let $\bm n=(n_1,\dots,n_m)$ with $|\bm n|=n$. Let $\hat G$ be a Gaussian unitary over $m$ modes. For $\bm\alpha\in\mathbb C^m$, let us write $\bm V$ and $\bm{\tilde d}=(\bm d,\bm d^*)$ the covariance matrix and the displacement vector of the Gaussian state $\hat G^\dag\ket{\bm\alpha}$. Then, the output probability density for the $G_{\text{Fock}}$ circuit $\hat G$ with Fock state input $\ket{\bm n}$ and heterodyne detection, evaluated at $\bm\alpha$, is given by
\be
\Prb_{\text{Fock}}[\bm\alpha]=\frac{\exp\left[-\frac12\bm{\tilde d}^\dag\left(\bm V+\mathbb1_{2m}/2\right)^{-1}\bm{\tilde d}\right]}{\bm n!\pi^m\sqrt{\Det\,(\bm V+\mathbb1_{2m}/2)}}\lHaf(A_{\bm n,\bm n}),
\label{PrFock}
\ee
where $A_{\bm n,\bm n}$ is the square matrix of size $2n$ obtained from
\be
V=\begin{pmatrix}\mymathbb0_m & \mathbbm1_m \\ \mathbb1_m & \mymathbb0_m \end{pmatrix}\left[\mathbb1_{2m}-\left(\bm V+\mathbb1_{2m}/2\right)^{-1}\right]
\ee
and
\be
D=\left[\bm{\tilde d}^\dag\left(\bm V+\mathbb1_{2m}/2\right)^{-1}\right]^T.
\ee
by replacing the diagonal of $V$ by the elements of $D$, then by repeating $n_k$ times the $k^{th}$ and the $(m+k)^{th}$ rows and columns of the obtained matrix for all $k\in\{1,\dots,m\}$.
\end{coro}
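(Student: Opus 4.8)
The plan is to obtain Corollary~\ref{coro:PrFock} as an immediate specialisation of Theorem~\ref{th:Pr} to the case where the input core state is a single multimode Fock state. First I would note that $\ket{\bm n}$ is itself a core state: by Eq.~(\ref{stellarfuncmulti}) its stellar function is the monomial $\bm z^{\bm n}/\sqrt{\bm n!}$, whose degree-sum equals $|\bm n|=n$, so $\ket{\bm n}$ is a core state of degree $n$ and Theorem~\ref{th:Pr} applies with that value of $n$. Writing $\ket{\bm n}$ in the form $\ket{\bm C}=\sum_{|\bm p|\le n}c_{\bm p}\ket{\bm p}$ simply amounts to taking $c_{\bm p}=\delta_{\bm p,\bm n}$, so that $c_{\bm n}=1$ and all other coefficients vanish.

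Next I would substitute these coefficients into Eq.~(\ref{Pr}). The double sum over $\bm p,\bm q$ then collapses to the single term $\bm p=\bm q=\bm n$, for which $c_{\bm p}c_{\bm q}^*=|c_{\bm n}|^2=1$, the sign prefactor is $(-1)^{|\bm p|+|\bm q|}=(-1)^{2n}=1$, and the combinatorial weight is $1/\sqrt{\bm p!\,\bm q!}=1/\bm n!$. The matrix $A_{\bm p,\bm q}$ becomes $A_{\bm n,\bm n}$, of size $|\bm p|+|\bm q|=2n$, constructed from the same $V$ and $D$ as in the theorem; since here $p_k=q_k=n_k$ for every $k$, the theorem's recipe of repeating the $k^{th}$ row and column $p_k$ times and the $(m+k)^{th}$ row and column $q_k$ times is precisely the symmetric recipe of repeating both the $k^{th}$ and the $(m+k)^{th}$ rows and columns $n_k$ times, as stated in the corollary. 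Multiplying the surviving term by the Gaussian prefactor $\kappa(\bm\alpha,\hat G)$ and absorbing the factor $1/\bm n!$ into it yields exactly Eq.~(\ref{PrFock}).

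I do not expect any genuine obstacle here: the argument is a direct substitution, and the only points warranting a moment's care are checking that the row/column-repetition prescription of Theorem~\ref{th:Pr} reduces to the symmetric one quoted in the corollary, and confirming that the degree of $\ket{\bm n}$ is indeed $n=|\bm n|$ so that the theorem is invoked with the correct parameter. Both are immediate from the definitions.
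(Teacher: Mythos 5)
Your proposal is correct and matches the paper's own (one-line) justification: the corollary is obtained by specialising Theorem~\ref{th:Pr} to the core state $\ket{\bm n}$, whereupon the double sum in Eq.~(\ref{Pr}) collapses to the single term $\bm p=\bm q=\bm n$ with sign $(-1)^{2n}=1$ and weight $1/\bm n!$. The checks you flag (degree of $\ket{\bm n}$ equals $|\bm n|$, and the row/column-repetition rule becoming symmetric) are exactly the right ones and are handled correctly.
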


\noindent Note that the expressions obtained in Theorem~\ref{th:Pr} and Corollary~\ref{coro:PrFock} may be used to retrieve the expressions of the output probability distributions for a large class of CV circuits that do not necessarily have all their non-Gaussian elements in the input. To see this, consider a $G_{\text{core}}$ circuit of size $2m$ with input $\ket C\otimes\ket{C'}$, where $\ket C$ and $\ket{C'}$ are $m$-mode core states, with Gaussian evolution $\hat G\otimes\mathbb1$, where $\hat G$ is an $m$-mode Gaussian evolution, and projecting onto tensor products of displaced two-mode squeezed states between mode $k$ and $m+k$, for all $k\in\{1,\dots,m\}$. Its output probability density evaluated at $0$, in the limit of infinite squeezing for the two-mode squeezed states, is given by $|\braket{C|\hat G|C'}|^2$, up to a normalisation factor. This encompasses the expressions of output probabilities of Boson Sampling circuits~\cite{Aaronson2013}, when $\hat G$ is a passive linear evolution and $\ket C$ and $\ket{C'}$ are multimode Fock states, or else of Gaussian Boson Sampling circuits~\cite{Hamilton2016}, when $\ket C=\ket0^{\otimes m}$.
%Moreover, the expressions in Theorem~\ref{th:Pr} and Corollary~\ref{coro:PrFock} are also valid for computational models including adaptive measurements such as MBQC~\cite{menicucci2006universal}, by replacing the adaptive Gaussian operations by controlled-Gaussian operations and delaying all the measurements to the end of the computation.

\section{Strong simulation of weakly non-Gaussian quantum circuits}
\label{sec:StrongsimuNG}

\noindent In this section, we use the expression obtained in Theorem~\ref{th:Pr} in order to study strong simulation of CV quantum circuits with few non-Gaussian elements. The first general result deals with $G_{\text{core}}$ circuits, i.e., Gaussian circuits with multimode core state input.

\begin{theo}\label{th:strong}
Let $m\in\mathbb N^*$ and let $\ket{\bm C}$ be an $m$-mode core state of support size $s$ and degree $n$. Then, $G_{\text{core}}$ circuits over $m$ modes with input $\ket{\bm C}$ and heterodyne detection can be strongly simulated classically in time $O(s^2n^32^n+\poly m)$.
\end{theo}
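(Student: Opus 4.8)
The plan is to reduce strong simulation of a $G_{\text{core}}$ circuit to the evaluation of the closed-form expression for $\Prb_{\text{core}}[\bm\alpha]$ given in Theorem~\ref{th:Pr}, and then to bound the cost of computing every ingredient of that formula. First I would observe that a $G_{\text{core}}$ circuit is described by the pair $(\ket{\bm C},\hat G)$, and that for any query point $\bm\alpha\in\mathbb C^m$ the density $\Prb_{\text{core}}[\bm\alpha]$ is exactly the quantity in Eq.~(\ref{Pr}). So strong simulation amounts to evaluating the right-hand side of Eq.~(\ref{Pr}) at an arbitrary $\bm\alpha$ in time polynomial in the relevant parameters. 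I would also note that marginals of the output density over a subset of the modes are again $G_{\text{core}}$-type densities on fewer modes (tracing out modes commutes with the Gaussian evolution up to a Gaussian unitary, and a partial trace of a core state is a mixture of core states of no larger degree and support), so it suffices to handle the full density; I would include a sentence to this effect.

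Next I would walk through the cost of the Gaussian prefactor $\kappa(\bm\alpha,\hat G)$ and of the matrices $V$ and $D$. Computing the covariance matrix $\bm V$ and displacement $\bm{\tilde d}$ of $\hat G^\dag\ket{\bm\alpha}$ is a symplectic-matrix computation on $2m$-dimensional objects, hence $\poly m$; forming $(\bm V+\mathbb1_{2m}/2)^{-1}$, its determinant, and the quadratic form in the exponent are all $O(m^3)=\poly m$ linear algebra; and assembling $V$ and $D$ is likewise $\poly m$. This accounts for the additive $\poly m$ in the stated running time. Then I would turn to the combinatorial sum: it ranges over pairs $\bm p,\bm q\in\mathbb N^m$ with $|\bm p|,|\bm q|\le n$, but by definition of support the coefficients $c_{\bm p}$ vanish outside the support of $\ket{\bm C}$, so only pairs $(\bm p,\bm q)$ with both indices in the support contribute — that is at most $s^2$ terms. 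For each such term we must build the matrix $A_{\bm p,\bm q}$, which has size $|\bm p|+|\bm q|\le 2n$, by taking $V$ with diagonal replaced by $D$ and then duplicating rows and columns according to $\bm p$ and $\bm q$; this is $O(n^2)$ work per term. Finally, each $\lHaf(A_{\bm p,\bm q})$ is a loop hafnian of a matrix of size at most $2n$, computable in time $O((2n)^3 2^{(2n)/2}) = O(n^3 2^n)$ by the algorithm of~\cite{bjorklund2019faster} recalled above.

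Putting these together: $s^2$ terms, each costing $O(n^2)$ for matrix assembly plus $O(n^3 2^n)$ for the loop hafnian, plus the one-time $\poly m$ for the Gaussian data, gives total time $O(s^2 n^3 2^n + \poly m)$, which is the claimed bound. I would close by remarking that the same argument applied with Corollary~\ref{coro:PrFock} — where the sum collapses to a single term — recovers the $G_{\text{Fock}}$ case, and that the dependence on the core state is only through its support size $s$ and degree $n$, as announced.

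The main obstacle, or at least the point requiring the most care, is the bookkeeping that turns the nominal sum over all $\bm p,\bm q$ with $|\bm p|,|\bm q|\le n$ into a sum over at most $s^2$ nonzero terms while keeping the per-term matrix-construction and loop-hafnian costs under control; one must be careful that the row/column repetition step producing $A_{\bm p,\bm q}$ from an $O(m)\times O(m)$ matrix does not secretly cost more than $O(n^2)$ (it does not, since the resulting matrix has size $\le 2n$ and we only read off the entries we need), and that the loop hafnian complexity $O(r^3 2^{r/2})$ is applied with $r = |\bm p|+|\bm q|\le 2n$ rather than with something larger. The rest is routine linear algebra and accounting.
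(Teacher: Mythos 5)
The first half of your argument — evaluating the full output density by summing $s^2$ loop hafnians of matrices of size at most $2n$, each in time $O(n^3 2^n)$, with $\poly m$ overhead for the Gaussian data $\kappa(\bm\alpha,\hat G)$, $V$ and $D$ — is correct and is exactly how the paper proceeds. The gap is in your one-sentence dismissal of the marginals, which is where the paper spends essentially all of its effort. You assert that "tracing out modes commutes with the Gaussian evolution up to a Gaussian unitary," so that a marginal of a $G_{\text{core}}$ circuit is again a $G_{\text{core}}$-type density on fewer modes. This is false in general: a Gaussian unitary entangles the modes, and $\Tr_{k+1,\dots,m}\bigl[\hat G\ket{\bm C}\!\bra{\bm C}\hat G^\dag\bigr]$ cannot be rewritten as $\hat G'\,\rho'\,\hat G'^\dag$ with $\hat G'$ a $k$-mode Gaussian unitary and $\rho'$ a reduced core state. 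The simplest counterexample is $\ket{\bm C}=\ket{00}$ with $\hat G$ a two-mode squeezer: the reduced state on mode $1$ is a thermal state, whose Fock support is unbounded, so it is not (a Gaussian unitary applied to) a mixture of core states of bounded degree and support. Your sub-claim that the partial trace of the \emph{core state itself} has no larger degree and support is true but irrelevant, because the trace cannot be pushed through $\hat G$ to reach the input.

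What is actually needed — and what the paper proves — is that the marginal can be written as
\be
\Prb\,[\alpha_1,\dots,\alpha_k]=\pi^{m-k}\!\int_{\bm\beta\in\mathbb C^m}{Q_{\hat G^\dag\left(\ket{\bm\alpha}\!\bra{\bm\alpha}\otimes\mathbb 1_{m-k}\right)\hat G}(\bm\beta)\,P_{\ket{\bm C}\!\bra{\bm C}}(\bm\beta)\,d^m\!\bm\beta\,d^m\!\bm\beta^*},
\ee
so that Lemma~\ref{lem:efficientC} still applies \emph{provided} the $Q$ function of the (non-normalisable) Gaussian operator $\hat G^\dag(\ket{\bm\alpha}\!\bra{\bm\alpha}\otimes\mathbb 1_{m-k})\hat G$ is an efficiently computable Gaussian function of $\bm\beta$. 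Establishing this requires resolving $\mathbb 1_{m-k}$ with the overcompleteness of coherent states and explicitly carrying out the resulting $2(m-k)$-dimensional Gaussian integral over the traced-out amplitudes, checking that the quadratic form stays invertible and that the outcome is again an efficiently computable Gaussian in $\bm\beta$. Since Definition~\ref{Strongs} explicitly demands the marginals, this step is not optional bookkeeping but roughly half of the theorem; your proposal as written does not establish it.
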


\noindent The expression of the probability density $\Prb_{\text{core}}[\bm\alpha]$ in Theorem~\ref{th:Pr} is composed of a Gaussian prefactor multiplied by a sum of $s^2$ loop hafnians of matrices of size at most $2n$, where $s$ is the suppost size and $n$ is the degree of the input core state. The Gaussian prefactor may be computed efficiently in $m$ the number of modes. Thus, to compute the output probability density, one may compute $s^2$ loop hafnians of matrices of size at most $2n$, which can be done in time $O(s^2n^32^n)$~\cite{bjorklund2019faster}. In order to obtain an algorithm for strong simulation, one also needs to compute marginals. We show in Appendix~\ref{app:thstrong} that these may also be computed in time $O(s^2n^32^n+\poly m)$. 

Theorem~\ref{th:strong} implies that strong simulation of $G_{\text{core}}$ circuits is efficient (polynomial in $m$) when the input core state has a logarithmic degree $n=O(\log m)$ and polynomial support size $s=O(\poly m)$. This result may be understood as a generalisation of the efficient classical simulability of Gaussian computations~\cite{bartlett2002efficient}, and has consequences for the simulability of various continuous-variable quantum computing models, in particular those based on Gaussian operations and photon additions or subtractions. We define and consider three interesting examples in what follows: Interleaved Photon-Added Gaussian circuits (IPAG), Interleaved Photon-Subtracted Gaussian circuits (IPSG) and Gaussian circuits with input Fock states ($G_{\text{Fock}}$).

Firstly, we define IPAG circuits with $m$ modes and $n$ photon additions as: (i) product vacuum state over $m$ modes in input, (ii) an evolution composed of interleaved multimode Gaussian unitaries $\hat G^{(0)},\dots,\hat G^{(n)}$ and $n$ single-mode photon additions, and (iii) a Gaussian measurement. Without loss of generality, all the photon additions act on the first mode, since swapping two modes is a Gaussian operation. Moreover, up to an added Gaussian unitary to the final Gaussian unitary $\hat G^{(n)}$, the measurement may be written as a tensor product of balanced heterodyne detections (Fig.~\ref{fig:GaGa}).

\begin{figure}
\begin{center}
\includegraphics[width=\columnwidth]{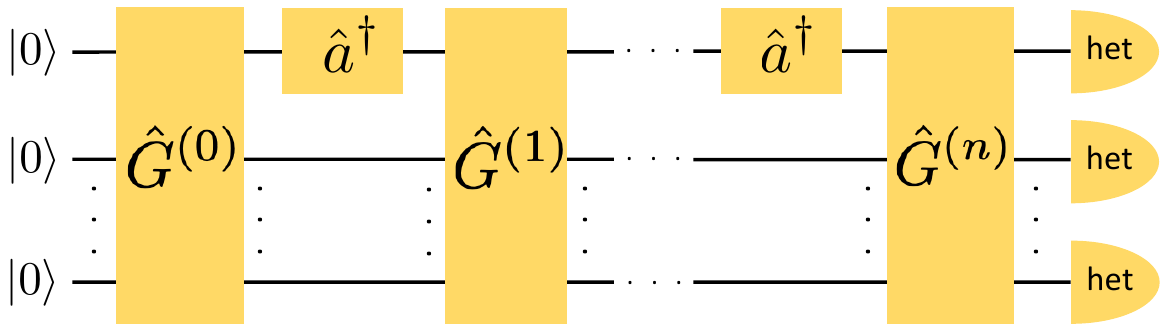}
\caption{Representation of Interleaved Photon-Added Gaussian circuits with $n$ photons additions. The unitaries $\hat G^{(0)},\dots,\hat G^{(n)}$ are Gaussian and the measurement is performed by balanced heterodyne detection. Note that all photon additions act on the first mode without loss of generality, since swapping two modes is a Gaussian operation.}
\label{fig:GaGa}
\end{center}
\end{figure}

The stellar hierarchy of single-mode pure quantum states derived in~\cite{chabaud2020stellar} details the engineering of a single-mode quantum state from the vacuum using unitary Gaussian operations and single photon addition as a non-Gaussian operation. In particular, the states of finite stellar rank, which correspond to the states that can be obtained from the vacuum using a finite number of single photon additions or subtractions, are shown to be exactly the states that are obtained by applying a Gaussian unitary operation to a single-mode core state (see Eq.~(\ref{stellardec})). 

The situation is different in the multimode case: we show that the set of states that can be obtained from a multimode core state with a multimode Gaussian unitary operation is strictly larger than the set of states that can be obtained from the vacuum using a finite number of single photon additions and Gaussian unitary operations. We do so by providing explicitly an example of a state that is a multimode core state, and that is not obtainable with IPAG circuits (Lemma~\ref{lem:multiGconv}). We also deduce strong simulability results for IPAG circuits.

We first establish a reduction to an equivalent model where the evolution and measurement are Gaussian and only the input state is non-Gaussian. This is done by commuting the photon additions to the input of the circuit. The output state of an IPAG circuit with $m$ modes, $n$ photon additions and Gaussian unitaries $\hat G^{(0)},\dots,\hat G^{(n)}$ is given by
\be
\hat G^{(n)}\hat a_1^\dag\hat G^{(n-1)}\hat a_1^\dag\dots\hat G^{(1)}\hat a_1^\dag\hat G^{(0)}\ket0^{\otimes m}.
\label{interleaved1}
\ee
Gaussian operations act on annihilation and creation operators through their symplectic representation, inducing affine transformations of the vector of annihilation and creation operators~\cite{weedbrook2012gaussian}. Let us define the column vector of ladder operators
\be
\bm\lambda=\begin{pmatrix}\hat a_1^\dag\\ \vdots\\ \hat a_m^\dag\\ \hat a_1\\ \vdots\\ \hat a_m\end{pmatrix},
\ee
and let $\hat G$ be an $m$-mode Gaussian operation. Then, there exists a $2m\times2m$ symplectic matrix $\bm S=(s_{ij})_{1\le i,j\le2m}$ and a complex vector $\bm d=(d_1,\dots,d_m)$, such that for all $k\in\{1,\dots,m\}$,
\be
\ba
\hat G\hat a_k^\dag\hat G^\dag&=d_k+(\bm S\bm\lambda)_k\\
&=d_k+\sum_{l=1}^m{s_{k,l}\hat a_l^\dag+s_{k,m+l}\hat a_l},
\ea
\label{commutG}
\ee
where $(\bm S\bm\lambda)_k$ indicates the $k^{th}$ element of the column vector $\bm S\bm\lambda$ and the identity operator is omitted for brevity.
Hence, commuting to the right the creation operators in Eq.~(\ref{interleaved1}), starting by the rightmost one, yields
\be
\ba
\,&\hat G^{(n)}\hat a_1^\dag\dots\hat G^{(1)}\hat a_1^\dag\hat G^{(0)}\ket0^{\otimes m}\\
&=\hat G^{(n)}\hat a_1^\dag\dots a_1^\dag\hat G^{(1)}\hat G^{(0)}\left[d_1^{(0)}+(\bm S^{(0)}\bm\lambda)_1\right]\!\ket0^{\otimes m}\\
&=\dots\\
&=\hat G^{(n)}\dots\hat G^{(0)}\\
&\quad\times\left[d_1^{(n-1)}+(\bm S^{(n-1)}\bm\lambda)_1\right]\dots\left[d_1^{(0)}+(\bm S^{(0)}\bm\lambda)_1\right]\ket0^{\otimes m},
\ea
\label{interleaved2}
\ee
where $\bm S^{(k)}$ and $\bm d^{(k)}=(d^{(k)}_1,\dots,d^{(k)}_m)$ implement the affine transformation corresponding to the action of $(\hat G^{(k)}\hat G^({k-1)}\dots\hat G^{(0)})^\dag$, for all $k\in\{0,\dots,n-1\}$. Writing $\hat G:=\hat G^{(n)}\hat G^{(n-1)}\dots\hat G^{(0)}$ and $\bm S^{(k)}=(s^{(k)}_{i,j})_{1\le i,j\le2m}$ for $k\in\{0,\dots,n-1\}$, we obtain the output state
\be
\hat G\ket{{\bm C}_{\text{IPAG}}},
\label{IPAGoutput}
\ee
where the state
\be
\ba
\ket{{\bm C}_{\text{IPAG}}}&:=\left(d_1^{(n-1)}+\sum_{l=1}^m{s_{1,l}^{(n-1)}\hat a_l^\dag+s_{1,m+l}^{(n-1)}\hat a_l}\right)\\
&\dots\left(d_1^{(0)}+\sum_{l=1}^m{s_{1,l}^{(0)}\hat a_l^\dag+s_{1,m+l}^{(0)}\hat a_l}\right)\ket0^{\otimes m}
\ea
\label{multimodecore}
\ee
is a multimode core state of degree equal to $n$, by property of symplectic matrices. Using this characterisation, we obtain the following result:

\begin{lem}\label{lem:multiGconv}
The set of output states of IPAG circuits is strictly included in the set of output states of $G_{\text{core}}$ circuits.
\end{lem}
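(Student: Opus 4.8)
The plan is to establish the two halves of the claim separately: (i) every IPAG output state is a $G_{\text{core}}$ output state, and (ii) there is a $G_{\text{core}}$ output state that is not an IPAG output state. The first inclusion is essentially already done in the preceding discussion: Eq.~(\ref{IPAGoutput}) together with Eq.~(\ref{multimodecore}) exhibits an arbitrary IPAG output state as $\hat G\ket{{\bm C}_{\text{IPAG}}}$ with $\hat G$ a Gaussian unitary and $\ket{{\bm C}_{\text{IPAG}}}$ a multimode core state (its stellar function being a product of $n$ affine-linear forms in the $z_k$, hence a polynomial of degree at most $n$). So an IPAG circuit is a $G_{\text{core}}$ circuit with this particular input core state and Gaussian evolution, and the inclusion of output-state sets follows. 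One subtlety to address: the IPAG measurement is balanced heterodyne, whereas $G_{\text{core}}$ allows general Gaussian measurements — but since the claim is about the set of output \emph{states} (not the measurement), this is immaterial; I would state the inclusion at the level of states and note the measurement is absorbed into $\hat G$.

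For the strictness in (ii), the key point is that IPAG output states have a very rigid structure: by Eq.~(\ref{multimodecore}) their stellar function, up to the Gaussian unitary, is a product of $n$ linear polynomials, i.e.\ it factors completely into linear factors over $\C$. A general multimode core state of degree $n$ has a stellar function that is an arbitrary polynomial of degree $\le n$, and for $m\ge 2$ such a polynomial need not factor into linear forms. So the strategy is to exhibit an explicit $m$-mode core state whose stellar function is an irreducible (or at least not-completely-reducible) polynomial, and argue it cannot equal $\hat G$ applied to any product-of-linear-forms core state. The natural candidate, following the example given earlier in the text, is a two-mode state like $\ket{\bm C}=\tfrac1{\sqrt2}(\ket{11}+\ket{00})$ or similar, whose stellar function is $\tfrac12 z_1 z_2 + \tfrac{1}{\sqrt 2}$ — I would check which simple choice works cleanly. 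The nontrivial part is ruling out that some Gaussian unitary $\hat G$ could turn an IPAG core state into this state: one must show that applying a Gaussian unitary to $\ket{{\bm C}_{\text{IPAG}}}$ cannot produce a core state unless the resulting stellar function still factors into linear forms. This requires understanding how a Gaussian unitary acts on the stellar function — it sends $F^\star_{\bm\psi}(\bm z)$ to $F^\star_{\hat G\bm\psi}$, which (for the result to again be a polynomial, i.e.\ a core state) forces $\hat G$ to be, on the relevant factor structure, essentially a combination of mode-mixing and displacement but no squeezing in the directions that would break polynomiality; more precisely, squeezing multiplies the stellar function by a Gaussian factor $e^{q(\bm z)}$ with $q$ quadratic, which destroys polynomiality unless it cancels, and the surviving transformations are affine substitutions $\bm z\mapsto \bm M\bm z + \bm v$ which preserve complete factorability into linear forms. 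Hence $\hat G\ket{{\bm C}_{\text{IPAG}}}$ is a core state only when its stellar function is again a product of linear forms, which the chosen example is not.

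The main obstacle I anticipate is the rigorous bookkeeping of how a multimode Gaussian unitary acts on the stellar function and the precise argument that "remaining a core state after $\hat G$" forces the affine-substitution form — this is where the single-mode intuition from~\cite{chabaud2020stellar} has to be genuinely extended, since in the multimode case one must be careful that cross-mode squeezing could in principle conspire to keep things polynomial. I would handle this by invoking (or proving in an appendix) a multimode analogue of the single-mode fact that $F^\star_{\hat G\bm\psi}(\bm z) = e^{Q(\bm z)} F^\star_{\bm\psi}(\bm A\bm z + \bm b)$ for a quadratic $Q$ and affine map depending on the symplectic data of $\hat G$, then observing that the left side is a polynomial iff $e^{Q(\bm z)}$ times a polynomial is a polynomial, forcing $Q\equiv 0$ on the span of the variables actually appearing, so that $F^\star_{\hat G\bm\psi}$ is an affine substitution into $F^\star_{\bm\psi}$ and thus inherits complete linear factorability. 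Everything else — verifying the stellar function of the chosen example and checking it is not a product of linear forms — is a short explicit computation.
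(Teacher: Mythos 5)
The first half of your argument (every IPAG output state is a $G_{\text{core}}$ output state) is fine and coincides with the paper's, which simply invokes Eq.~(\ref{IPAGoutput}). The strictness half, however, rests on a false structural claim. You assert that, by Eq.~(\ref{multimodecore}), the stellar function of an IPAG core state is a product of $n$ affine-linear forms in $z_1,\dots,z_m$. It is not: each factor $d_1^{(k)}+\sum_l s_{1,l}^{(k)}\hat a_l^\dag+s_{1,m+l}^{(k)}\hat a_l$ contains annihilation operators, which act on the stellar function as derivatives $\partial_{z_l}$ rather than as multiplication by a linear form, and the contractions between an $\hat a_l$ in one factor and an $\hat a_{l}^\dag$ in an earlier factor generate cross terms that destroy factorability. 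Concretely, with $m=2$ and $n=2$, take the rightmost factor to be $\hat a_2^\dag$ (coefficient row $(0,1,0,0)$) and the leftmost to be $\cosh r\,\hat a_1^\dag+\sinh r\,\hat a_2$ (coefficient row $(\cosh r,0,0,\sinh r)$); both satisfy the Bogoliubov normalisation $\sum_l|s_{1,l}|^2-\sum_l|s_{1,m+l}|^2=1$, so both are admissible first rows of symplectic matrices. The resulting IPAG core state is $\cosh r\,\ket{11}+\sinh r\,\ket{00}$, whose stellar function $\cosh r\,z_1z_2+\sinh r$ does not factor into linear forms for $r\neq0$. This simultaneously refutes your proposed invariant and shows that your tentative counterexample ($\ket{11}+\ket{00}$ up to coefficients) is essentially IPAG-reachable, so the separation argument collapses. (Your discussion of how a Gaussian unitary acts on the stellar function --- the Gaussian prefactor being forced to be trivial by polynomiality, leaving an affine substitution --- is sound, and is in fact a point the paper treats rather lightly; but it cannot rescue a proof built on a property that IPAG states do not actually possess.)

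What the paper does instead is expand the degree-$2$ product of Eq.~(\ref{multimodecore}) directly in the Fock basis, \emph{keeping} the contraction terms (note the $\sum_k s_k^{(0)}s_{m+k}^{(1)}$ contribution to the vacuum component in Eq.~(\ref{IPAGcore})), and then match coefficients against the target $\ket{20}+\ket{01}$. The decisive feature of that target is that it combines a doubly occupied mode with a singly occupied \emph{different} mode: the $\ket{\bm2_1}$ component forces $s_1^{(0)}s_1^{(1)}\neq0$; killing all $\ket{\bm1_k+\bm1_l}$ components with $k\neq l$ then forces $s_2^{(0)}=s_2^{(1)}=0$, which makes the $\ket{\bm1_2}$ coefficient vanish --- a contradiction. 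If you wish to salvage a stellar-function viewpoint, the correct statement extractable from Eq.~(\ref{multimodecore}) is that $F^\star_{\bm C_{\text{IPAG}}}$ lies in the image of the constant function $1$ under products of operators $d+\sum_l(u_lz_l+v_l\partial_{z_l})$, and you would then have to show directly that $z_1^2/\sqrt2+z_2$ is not of this form --- which is the paper's coefficient computation in disguise.
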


\noindent We prove this Lemma in Appendix~\ref{app:multiGconv} by showing that the core state $\frac1{\sqrt2}(\ket{20}+\ket{01})$ cannot be generated by an IPAG circuit. In other words, the set of states that can be obtained from a multimode core state with a multimode Gaussian unitary operation is strictly larger than the set of states that can be obtained from the vacuum using a finite number of single photon additions and Gaussian unitary operations, unlike in the single mode case, where the two sets coincide.

When $n=O(1)$, the support size of the core state $\ket{{\bm C}_{\text{IPAG}}}$ in Eq.~(\ref{multimodecore}) is $O(\poly m)$ and its degree is $O(1)$. Then, from a direct application of Theorem~\ref{th:strong} we obtain:

\begin{lem}\label{IPAG}
IPAG circuits over $m$ modes with $n=O(1)$ photon additions can be strongly simulated efficiently classically.
\end{lem}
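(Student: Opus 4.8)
The plan is to observe that an IPAG circuit is nothing but a $G_{\text{core}}$ circuit in disguise, and then to invoke Theorem~\ref{th:strong} with the right parameters. First I would use the commutation computation of Eqs.~(\ref{interleaved1})--(\ref{multimodecore}): pushing each of the $n$ photon additions through the Gaussian unitaries to the input of the circuit rewrites the IPAG output as $\hat G\ket{{\bm C}_{\text{IPAG}}}$, where $\hat G=\hat G^{(n)}\cdots\hat G^{(0)}$ is a single Gaussian unitary and $\ket{{\bm C}_{\text{IPAG}}}$ is the multimode core state of Eq.~(\ref{multimodecore}). Since, as noted above Fig.~\ref{fig:GaGa}, the final measurement can be taken to be a tensor product of single-mode balanced heterodyne detections up to a Gaussian unitary absorbed into $\hat G$, the IPAG circuit is exactly a $G_{\text{core}}$ circuit with input $\ket{{\bm C}_{\text{IPAG}}}$ and evolution $\hat G$, so Theorem~\ref{th:strong} applies verbatim once its two parameters are bounded.

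Next I would bound the degree and support size of $\ket{{\bm C}_{\text{IPAG}}}$. By construction this state is a product of $n$ affine forms in the $2m$ ladder operators applied to $\ket0^{\otimes m}$; normal-ordering, i.e.\@ commuting annihilation operators to the right where they kill the vacuum, leaves a polynomial of degree at most $n$ in the creation operators acting on the vacuum, so $\ket{{\bm C}_{\text{IPAG}}}$ is a superposition of Fock states $\ket{\bm p}$ with $|\bm p|\le n$. Hence its degree is at most $n$ and its support size obeys $s\le\binom{m+n}{n}=O(m^n)$, which is $O(\poly m)$ for $n=O(1)$. I would then check that the classical data fed to Theorem~\ref{th:Pr} can be assembled in polynomial time: the symplectic matrices $\bm S^{(k)}$ and displacements $\bm d^{(k)}$ come from composing $O(n)$ Gaussian unitaries (products of $2m\times2m$ matrices), the coefficients $c_{\bm p}$ come from expanding and normal-ordering the product of $n$ forms of $2m+1$ terms each, which costs $O(\poly m)$ when $n$ is constant, and the covariance matrix and displacement of $\hat G^\dag\ket{\bm\alpha}$ are computable in $O(\poly m)$ as well. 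Plugging $s=O(\poly m)$ and $n=O(1)$ into the $O(s^2n^32^n+\poly m)$ bound of Theorem~\ref{th:strong} then yields the efficient classical strong simulation.

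I do not expect a genuine obstacle here: the statement is essentially a corollary of Theorem~\ref{th:strong} together with the reduction already worked out above. The only point requiring care is that Theorem~\ref{th:strong} presupposes the core state as given, whereas here it must be reconstructed from the IPAG description; so the slightly delicate step is the bookkeeping showing that every relevant quantity grows at most polynomially in $m$ (the support size $\binom{m+n}{n}$, the normal-ordering of $n$ linear forms, and the Gaussian covariance data), which is harmless precisely because $n$ is held constant.
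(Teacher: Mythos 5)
Your proposal is correct and follows essentially the same route as the paper: commute the $n$ photon additions to the input via Eqs.~(\ref{interleaved1})--(\ref{multimodecore}) to obtain a $G_{\text{core}}$ circuit with core state $\ket{{\bm C}_{\text{IPAG}}}$ of degree $n=O(1)$ and support size $O(\poly m)$ (your $\binom{m+n}{n}$ bound makes this explicit), then apply Theorem~\ref{th:strong}. The additional bookkeeping you include about efficiently assembling the symplectic data and the coefficients $c_{\bm p}$ is a sound, if implicit in the paper, verification.
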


\noindent When $n=O(\log m)$ however, the support size of the core state is superpolynomial, so the classical simulation is no longer efficient.  Note that when photon additions are implemented using Gaussian operations and threshold detection, Ref.~\cite{quesada2018gaussian} gives a classical simulation algorithm which has exponential space complexity in the number of photon additions. 

Similarly, we define Interleaved Photon-Subtracted Gaussian circuits (IPSG) by replacing photon additions by subtractions in the definition of IPAG circuits. With a similar proof we obtain the following result:

\begin{coro}\label{coro:IPSG}
IPSG circuits over $m$ modes with $n=O(1)$ photon subtractions can be strongly simulated efficiently classically.
\end{coro}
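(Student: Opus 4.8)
The plan is to mirror the treatment of IPAG circuits: commute the photon subtractions to the input so as to exhibit the output as a $G_{\text{core}}$ circuit, and then invoke Theorem~\ref{th:strong}. Replacing each $\hat a_1^\dag$ by $\hat a_1$ in Eq.~(\ref{interleaved1}), the (unnormalised) output state of an IPSG circuit over $m$ modes with $n$ photon subtractions and interleaved Gaussian unitaries $\hat G^{(0)},\dots,\hat G^{(n)}$ is
\be
\hat G^{(n)}\hat a_1\hat G^{(n-1)}\hat a_1\cdots\hat G^{(1)}\hat a_1\hat G^{(0)}\ket0^{\otimes m}.
\label{IPSGoutput}
\ee
First I would record the analogue of Eq.~(\ref{commutG}) for annihilation operators: a Gaussian unitary acts affinely on the ladder-operator vector $\bm\lambda$, so there exist a symplectic matrix $\bm S$ and a complex number $\tilde d_k$ with $\hat G\hat a_k\hat G^\dag=\tilde d_k+(\bm S\bm\lambda)_{m+k}$, which may also be obtained by conjugating Eq.~(\ref{commutG}). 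Commuting the annihilation operators to the right exactly as in Eq.~(\ref{interleaved2}) --- using that $(\hat G^{(k)}\cdots\hat G^{(0)})^\dag$ induces an affine transformation given by a symplectic matrix $\bm S^{(k)}$ and a displacement, for each $k\in\{0,\dots,n-1\}$ --- then rewrites Eq.~(\ref{IPSGoutput}) as $\hat G\ket{{\bm C}_{\text{IPSG}}}$ with $\hat G:=\hat G^{(n)}\cdots\hat G^{(0)}$, where $\ket{{\bm C}_{\text{IPSG}}}$ is the exact analogue of Eq.~(\ref{multimodecore}) with each $\hat a_1^\dag$-row of $\bm S^{(k)}$ replaced by the corresponding $\hat a_1$-row: a product of $n$ factors, each an affine combination of $\hat a_1^\dag,\dots,\hat a_m^\dag,\hat a_1,\dots,\hat a_m$, applied to $\ket0^{\otimes m}$.

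Next I would argue, as after Eq.~(\ref{multimodecore}), that $\ket{{\bm C}_{\text{IPSG}}}$ is a multimode core state: expanding the product gives a linear combination of monomials in the $\hat a_l^\dag,\hat a_l$ of total degree at most $n$, and each such monomial maps the vacuum to a state supported on Fock states $\ket{\bm p}$ with $|\bm p|\le n$ (the photon number being the net excess of creation over annihilation operators). Hence $\ket{{\bm C}_{\text{IPSG}}}$ has degree at most $n$ and support contained in $\{\ket{\bm p}:|\bm p|\le n\}$, a set of cardinality $\binom{m+n}{n}=O(m^n)$. Expanding the product also yields all the nonzero Fock coefficients $c_{\bm p}$ of $\ket{{\bm C}_{\text{IPSG}}}$ in time $O(\poly m)$ when $n=O(1)$, and hence its squared norm $\|\ket{{\bm C}_{\text{IPSG}}}\|^2=\sum_{\bm p}|c_{\bm p}|^2$ --- the heralding probability of the $n$ subtractions --- in the same time; so a classical description of the normalised core state $\ket{{\bm C}_{\text{IPSG}}}/\|\ket{{\bm C}_{\text{IPSG}}}\|$ is obtained efficiently.

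Finally I would apply Theorem~\ref{th:strong} to the resulting $G_{\text{core}}$ circuit, with the normalised core state above in input, Gaussian evolution $\hat G$ and heterodyne detection (the Gaussian part of the original measurement being absorbed into $\hat G$, as in the IPAG case of Fig.~\ref{fig:GaGa}): with support size $s=O(m^n)$ and degree $n=O(1)$, its running time is $O(s^2n^32^n+\poly m)=O(\poly m)$, which proves the claim. The only point requiring care beyond the IPAG argument --- and the main, if mild, obstacle --- is that photon subtraction is not unitary: the commuted factors no longer compose to a unitary, so $\ket{{\bm C}_{\text{IPSG}}}$ may in degenerate cases vanish (for instance when all the Gaussian unitaries are passive, the subtractions annihilate the state and the conditional output density is undefined); whenever the heralding amplitude is nonzero, the explicit normalisation above makes the reduction to a genuine $G_{\text{core}}$ circuit go through, and no new hafnian evaluation is needed.
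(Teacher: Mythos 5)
Your proof is correct and follows exactly the route the paper intends: the paper's entire argument for this corollary is the remark that one repeats the IPAG reduction with $\hat a_1$ in place of $\hat a_1^\dag$, obtaining a core state of degree at most $n$ and support size $O(m^n)$ to which Theorem~\ref{th:strong} applies. Your explicit treatment of the normalisation (the heralding probability, and the degenerate case where the subtractions annihilate the state) is a welcome detail the paper leaves implicit.
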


\noindent Note that the same reasoning holds for Gaussian circuits interleaved with both photon additions and subtractions.

A particular subclass of IPAG circuits, where all the photon additions act at the beginning of the circuit, is the class of $G_{\text{Fock}}$ circuits, i.e., Gaussian circuits with Fock state input. In that case, the input is a multimode core state of support size $1$. With Corollary~\ref{coro:PrFock}, we obtain the following result as an immediate consequence of Theorem~\ref{th:strong}:

\begin{lem}\label{th:strongFock}
Let $m\in\mathbb N^*$ and let $\bm n\in\mathbb N^m$, such that $|\bm n|=O(\log m)$. Then, $G_{\text{Fock}}$ circuits over $m$ modes with Fock state input $\ket{\bm n}$ and heterodyne detection can be strongly simulated efficiently classically.
\end{lem}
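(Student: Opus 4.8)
The plan is to obtain this statement as a direct specialisation of Theorem~\ref{th:strong}. The key observation is that a multimode Fock state $\ket{\bm n}$ is itself a multimode core state: by Eq.~(\ref{stellarfuncmulti}) its stellar function is the single monomial $\bm z^{\bm n}/\sqrt{\bm n!}$, so its support is $\{\ket{\bm n}\}$, i.e.\ support size $s=1$, and its degree is the degree-sum of that monomial, namely $|\bm n|=n$. Hence a $G_{\text{Fock}}$ circuit with input $\ket{\bm n}$ is nothing but a $G_{\text{core}}$ circuit whose input core state has parameters $s=1$ and degree $n$.

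First I would invoke Theorem~\ref{th:strong} with these values, which gives a classical strong simulation algorithm running in time $O(s^2n^32^n+\poly m)=O(n^32^n+\poly m)$. Next I would substitute the hypothesis $n=|\bm n|=O(\log m)$: then $2^n=2^{O(\log m)}=\poly m$ and $n^3=O(\log^3 m)=\poly m$, so $n^32^n=\poly m$ and the total runtime is polynomial in $m$. By Definition~\ref{Strongs} this is exactly efficient strong simulation, which proves the Lemma.

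Alternatively, and more explicitly, one can bypass the full apparatus of Theorem~\ref{th:strong} and work directly from the closed form in Corollary~\ref{coro:PrFock}: evaluating $\Prb_{\text{Fock}}[\bm\alpha]$ amounts to a Gaussian prefactor, computable in $\poly m$ time, multiplied by a single loop hafnian of a matrix of size $2n$, computable in time $O(n^32^n)$ by~\cite{bjorklund2019faster}; one then checks, just as in the proof of Theorem~\ref{th:strong} in Appendix~\ref{app:thstrong}, that the marginals of $\Prb_{\text{Fock}}$ over any subset of modes admit expressions of the same form and complexity. There is no genuine obstacle here; the only point that requires a little care is the bookkeeping for the marginals, but that is already settled by the argument underlying Theorem~\ref{th:strong}, and the remainder is simply substituting $s=1$ and $n=O(\log m)$ into the complexity bound.
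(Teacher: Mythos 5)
Your proposal is correct and follows exactly the paper's route: a Fock state $\ket{\bm n}$ is a core state of support size $s=1$ and degree $n=|\bm n|$, so Theorem~\ref{th:strong} (equivalently the explicit formula of Corollary~\ref{coro:PrFock}) gives runtime $O(n^3 2^n + \poly m)$, which is polynomial when $n=O(\log m)$. No gaps.
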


\noindent In other words, sampling with Gaussian measurements over $m$ modes from $n=O(\log m)$ indistinguishable photons is strongly simulable classically. This contrasts with the case where $m=\Omega(\poly n)$: in that case, strong simulation and even weak simulation is classically hard~\cite{chabaud2017continuous}. 

Note that, when restricting $G_{\text{Fock}}$ circuits to Boson Sampling circuits~\cite{Aaronson2013} using projection onto two-mode squeezed states as described in the previous section, we retrieve the fact that computing classically the output probabilities is efficient for a logarithmic number of input photons.

%--------------------------------------------------------------------------------

\section{Discussion and conclusions}
\label{sec:conclusion}

\noindent In this work, we generalised the notion of stellar rank to the multimode setting and we studied the simulatability of Gaussian circuits with multimode non-Gaussian input states of finite stellar rank, based on the properties of their underlying core states. In particular, we have shown that $G_{\text{core}}$ circuits over $m$ modes, with input states possessing a support of size $n=O(\poly m)$ over the Fock basis and which stellar function has degree $n=O(\log m)$ can be strongly simulated efficiently classically. 

Note that this result, formalised in Theorem~\ref{th:strong}, outperforms existing previous results available in the litterature.
In particular, in Ref.~\cite{pashayan2015estimating} it is shown that the cost for classically estimating the probability of a specific outcome of a quantum circuit---an easier task than sampling or strong simulatability---scales polynomially with the Wigner negativity of the circuit. In terms of the more commonly used Wigner logarithmic negativity~\cite{albarelli2018resource}, that result could be reformulated by saying that the cost for classically estimating the probability of a specific outcome of a quantum circuit scales exponentially with the Wigner logarithmic negativity of the circuit~\footnote{Note that formally the results of Ref.~\cite{pashayan2015estimating} hold for discrete-dimensional systems rather than CV. However, similar results in the CV case could be obtained by performing a discretisation of the phase space~\cite{veitch2013}.
}. For some core states of degree $O(\log m)$, such as the Fock state with $\log m$ photons in $\log m$ modes and vacuum in the other $m-\log m$ modes, the Wigner logarithmic negativity is $\Omega(\log m \log ( \log(m)))$. Therefore, the classical cost for estimating outcome probabilities with the simulation algorithm from Ref.~\cite{pashayan2015estimating} becomes superpolynomial, i.e., is no longer classically efficient. In contrast, the results on the efficient classical simulatability of Theorem~\ref{th:strong} show that we can simulate efficiently classically these circuits. Also note that our results deal with a stronger notion of simulatability than outcome probability estimation. 

Our results are complementary to those in a recently appeared work~\cite{garcia2020efficient}, where non-Gaussian states with unbounded Wigner negativity supplemented to Gaussian circuits are also shown to be classically efficiently simulable. In that work, the simulatability with input unbounded non-Gaussianity, namely with states characterized by infinite stellar rank, is possible due to the fact that the input states are discrete-variable stabiliser states encoded in CV by means of some bosonic encoding, such as  for instance the Gottesman-Kitaev and Preskill one~\cite{Gottesman2001}.

We have also identified various subclasses of $G_{\text{core}}$ circuits to which our simulation algorithm applies. In particular, we have shown that Gaussian circuits interleaved with a constant number of photon additions or subtractions can be efficiently strongly simulated classically. However, the classical algorithm is no longer efficient when the number of photon additions is logarithmic in the number of modes. This contrasts with the---intuitively equivalent---discrete variable case where strong simulation of Clifford circuit supplemented with a logarithmic number of T gates is classically efficient~\cite{Bravyi2016improved,Yifei2019approximate}. It would be interesting to investigate whether this is a fundamental difference between the discrete- and continuous-variable cases, or else if more efficient classical simulation algorithms may be derived in the case of CV circuits based on photon addition and subtraction.

%Given that single-mode non-Gaussian operations may be approximated using photon subtractions coupled to adaptive Gaussian measurements~\cite{arzani2017polynomial}, we expect our results to translate to approximate strong simulation results for an even larger class of circuits, using non-Gaussian gates that are no longer photon addition nor photon subtraction, e.g., a non-Gaussian cubic gate~\cite{lloyd1999quantum}. 
Since any quantum state may be approximated up to arbitrary precision using core states, we expect to obtain approximate simulation results for circuits with specific input non-Gaussian states, such as cat states or GKP states~\cite{Gottesman2001}.  It would also be interesting to investigate further weaker notions of classical simulability for these circuits, such as weak simulation. Finally, it is an open question whether the measure of non-Gaussianity based on the stellar rank can be related to other operational tasks, analogously to~\cite{howard2017}. We leave these questions for future work.

%--------------------------------------------------------------------------------

\subsection*{Acknowledgments}

\noindent G.F.\ acknowledges support from the Swedish Research Council (Vetenskapsr\r adet) through the project grant QuACVA, and from the Knut and Alice Wallenberg Foundation through the Wallenberg Center for Quantum Technology (WACQT). D.M.\ and F.G.\ acknowledge funding from the ANR through the ANR-17-CE24-0035 VanQuTe project.

%--------------------------------------------------------------------------------

\bibliographystyle{apsrev}
\bibliography{bibliography}

%--------------------------------------------------------------------------------

\widetext

\newpage

\appendix

%--------------------------------------------------------------------------------

\begin{center}
\LARGE \textbf{Appendix}
\end{center}

In this appendix we provide the notations and proofs for the results stated in the main text.

%--------------------------------------------------------------------------------

\section{Multi-index notations}
\label{app:multiindex}

\noindent We use bold math for multimode states, vectors and multi-index notations. Let $m,n\in\mathbb N^*$. We define $\bm0=(0,\dots,0)$ and $\bm1=(1,\dots,1)$, and we write $\bm0^n=(0,\dots,0)\in\mathbb N^n$ or $\bm1^n=(1,\dots,1)\in\mathbb N^n$ to avoid ambiguity. For all $k\in\{1,\dots,m\}$, we also define $\bm1_k=(0,\dots,0,1,0,\dots,0)$, where the $k^{th}$ entry is $1$ and all the other $m-1$ entries are $0$. For all $\bm z=(z_1,\dots,z_m)\in\mathbb C^m$, all $\bm z'=(z_1',\dots,z_m')\in\mathbb C^m$ and all $\bm p=(p_1,\dots,p_m)\in\mathbb N^m$ we write
\be
\ba
\,&\bm z^*=(z_1^*,\dots,z_m^*)\\
\,&-\bm z=(-z_1,\dots,-z_m)\\
\,&\bm{\tilde z}=\bm z\oplus\bm z^*=(z_1,\dots,z_m,z_1^*,\dots,z_m^*)\\
\,&\ket{\bm z}=\ket{z_1\dots z_m}\\
\,&\|\bm z\|^2=|z_1|^2+\cdots+|z_m|^2\\
\,&\bm z^{\bm p}=z_1^{p_1}\dots z_m^{p_m}\\
\,&\bm z+\bm z'=(z_1+z'_1,\dots,z_m+z'_m)\\
\,&\bm z\le\bm z'\Leftrightarrow z_k\le z'_k\quad\forall k\in\{1,\dots,m\}\\
\,&\bm p!=p_1!\dots p_m!\\
\,&|\bm p|=p_1+\cdots+p_m\\
\,&\partial^{\bm p}=\partial_1^{p_1}\dots\partial_m^{p_m}\\
\,&\left(\frac\partial{\partial\bm z}\right)^{\bm p}=\frac{\partial^{|\bm p|}}{\partial z_1^{p_1}\cdots\partial z_m^{p_m}}.
\ea
\label{multiindex}
\ee
%

%--------------------------------------------------------------------------------

\section{Proof of Theorem~\ref{th:Pr}}
\label{app:thPr}

\noindent We first prove an intermediate technical result:

\begin{lem}\label{lem:efficientC}
Let $m\in\mathbb N^*$, let $V$ be a $2m\times 2m$ symmetric matrix and let $D$ be a column vector of size $2m$. For all $\bm p,\bm q\in\mathbb N^m$, there exists a square matrix $A_{\bm p,\bm q}(V,D)$ of size $|\bm p|+|\bm q|$ such that
\be
\ba
T_{\bm p,\bm q}(V,D)&:=\int_{\bm\beta\in\mathbb C^m}{\exp\left[\frac12\bm{\tilde\beta}^TV\bm{\tilde\beta}+D^T\bm{\tilde\beta}\right]\left(\frac{\partial}{\partial\bm\beta}\right)^{\bm p}\left(\frac{\partial}{\partial\bm\beta^*}\right)^{\bm q}\delta^{2m}(\bm\beta,\bm\beta^*)\,d^m\!\bm\beta\,d^m\!\bm\beta^*}\\
&\,\,=(-1)^{|\bm p|+|\bm q|}\lHaf\left[A_{\bm p,\bm q}(V,D)\right],
\ea
\ee
assuming the integral is well defined. The matrix $A_{\bm p,\bm q}(V,D)$ is obtained by repeating the entries of $V$ according to $\bm p$ and $\bm q$ and replacing the diagonal of the matrix obtained by the corresponding elements of $D$ (a detailed example follows the proof).
\end{lem}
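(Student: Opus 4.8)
Here is my plan for proving Lemma~\ref{lem:efficientC}.

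\textbf{Overview of the approach.} The plan is to evaluate the integral $T_{\bm p,\bm q}(V,D)$ by integration by parts, transferring all derivatives from the Dirac delta onto the Gaussian-type exponential factor, and then to recognize the resulting multivariate derivative of $\exp[\frac12\bm{\tilde\beta}^TV\bm{\tilde\beta}+D^T\bm{\tilde\beta}]$ evaluated at $\bm\beta=\bm 0$ as a loop hafnian. First I would treat the $2m$-dimensional Dirac delta $\delta^{2m}(\bm\beta,\bm\beta^*)$ as a distribution in the $2m$ real integration variables (equivalently, in $\bm\beta$ and $\bm\beta^*$ treated formally as independent). Integrating by parts $|\bm p|+|\bm q|$ times picks up a global sign $(-1)^{|\bm p|+|\bm q|}$ and leaves
\be
T_{\bm p,\bm q}(V,D)=(-1)^{|\bm p|+|\bm q|}\left[\left(\frac{\partial}{\partial\bm\beta}\right)^{\bm p}\left(\frac{\partial}{\partial\bm\beta^*}\right)^{\bm q}\exp\left(\tfrac12\bm{\tilde\beta}^TV\bm{\tilde\beta}+D^T\bm{\tilde\beta}\right)\right]_{\bm\beta=\bm\beta^*=\bm 0}.
\ee
So the whole statement reduces to identifying this derivative-at-zero with $\lHaf[A_{\bm p,\bm q}(V,D)]$.

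\textbf{Key combinatorial step.} The remaining task is a known-style identity: the mixed partial derivative at the origin of $\exp(\frac12 \bm x^T V \bm x + D^T \bm x)$ in $2m$ variables, where one differentiates $p_k$ times in variable $k$ and $q_k$ times in variable $m+k$, equals the loop hafnian of the matrix built by (i) replacing the diagonal of $V$ by $D$, and (ii) duplicating row/column $k$ a total of $p_k$ times and row/column $m+k$ a total of $q_k$ times. I would prove this by expanding $\exp(\frac12\bm x^T V\bm x + D^T\bm x)$ as a power series and extracting the coefficient of the relevant monomial; differentiating term by term and evaluating at $\bm 0$ selects exactly the contributions in which every differentiation variable is "used up." Equivalently — and this is the cleaner route — I would argue by induction on $|\bm p|+|\bm q|$: each derivative $\partial/\partial x_i$ either hits the linear term $D^T\bm x$ (contributing a factor $D_i$, i.e.\ a loop $\{i,i\}$ in the matching picture) or hits the quadratic form, bringing down $(V\bm x)_i$, after which a later derivative $\partial/\partial x_j$ may hit that factor and contribute $V_{ij}$ (an edge $\{i,j\}$). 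Setting $\bm x=\bm 0$ at the end forces every factor of $\bm x$ to have been differentiated away, so the surviving terms are precisely in bijection with the single-pair matchings of the multiset of differentiation indices — which is exactly the sum defining $\lHaf$ of the duplicated-and-diagonal-replaced matrix $A_{\bm p,\bm q}(V,D)$. This is where I'd invoke the definition \eqref{lHaf} directly.

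\textbf{Expected main obstacle.} The analytic content — justifying integration by parts against the Dirac delta and the convergence caveat ("assuming the integral is well defined") — is routine once one fixes the distributional interpretation; the genuinely fiddly part is bookkeeping the row/column duplication correctly so that the combinatorics of matchings on the duplicated index set matches the multinomial structure of $\left(\partial/\partial\bm\beta\right)^{\bm p}\left(\partial/\partial\bm\beta^*\right)^{\bm q}$, including the factor from the $\tfrac12$ in front of the quadratic form (which is absorbed because each edge $\{i,j\}$ with $i\ne j$ can be produced in two orders, cancelling the $\tfrac12$, while the diagonal $V_{ii}$ gets overwritten by $D_i$ anyway). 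I would therefore present the induction carefully and then simply illustrate the duplication rule on the small worked example promised in the statement, rather than writing out the general index manipulation in full. The symmetry of $V$ is used to ensure $A_{\bm p,\bm q}(V,D)$ is symmetric so that $\lHaf$ is well defined.
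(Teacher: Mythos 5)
Your proposal is correct and follows essentially the same route as the paper: integrate by parts to reduce $T_{\bm p,\bm q}(V,D)$ to $(-1)^{|\bm p|+|\bm q|}$ times the mixed derivative of $\exp[\frac12\bm{\tilde\beta}^TV\bm{\tilde\beta}+D^T\bm{\tilde\beta}]$ at $\bm{\tilde\beta}=\bm 0$, then observe that only first and second derivatives of the exponent survive, putting the nonvanishing terms in bijection with single-pair matchings of the multiset of differentiation indices, i.e.\ with $\lHaf$ of the duplicated, diagonal-replaced matrix. The paper organizes this combinatorial step via Fa\`a di Bruno's formula (summing over partitions into blocks of size $1$ and $2$) rather than your induction, but the content is identical; the one bookkeeping point to watch is that when an index is repeated (e.g.\ $p_i\ge 2$) the value $v_{ii}$ is not lost to the diagonal replacement — it survives as an off-diagonal entry between the two copies of row $i$ in $A_{\bm p,\bm q}(V,D)$.
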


\begin{proof}

Writing $\bm p=(p_1,\dots,p_m)$ and $\bm q=(q_1,\dots,q_m)$, we first get rid of the integral by successive integration by parts:
\begin{align}
\nonumber T_{\bm p,\bm q}(V,D)&=(-1)^{|\bm p|+|\bm q|}\left(\frac{\partial}{\partial\bm\beta}\right)^{\bm p}\left(\frac{\partial}{\partial\bm\beta^*}\right)^{\bm q}\exp\left[\frac12\bm{\tilde\beta}^TV\bm{\tilde\beta}+D^T\bm{\tilde\beta}\right]\Bigg\rvert_{\bm{\tilde\beta}=\bm0}\\
&=(-1)^{|\bm p|+|\bm q|}\prod_{j=1}^m{\left(\frac{\partial}{\partial\beta_j}\right)^{p_j}\left(\frac{\partial}{\partial\beta_j^*}\right)^{q_j}}\exp\left[\frac12\bm{\tilde\beta}^TV\bm{\tilde\beta}+D^T\bm{\tilde\beta}\right]\Bigg\rvert_{\bm{\tilde\beta}=\bm0}\\
\nonumber&=(-1)^{|\bm p|+|\bm q|}\prod_{j\in\mathcal E_{\bm p,\bm q}}{\left(\frac{\partial}{\partial\tilde\beta_j}\right)}\exp\left[\frac12\bm{\tilde\beta}^TV\bm{\tilde\beta}+D^T\bm{\tilde\beta}\right]\Bigg\rvert_{\bm{\tilde\beta}=\bm0},
\end{align}
where the multiset $\mathcal E_{\bm p,\bm q}$ is defined as the set of size $|\bm p|+|\bm q|$ obtained from $\{1,\dots,2m\}$ by repeating $p_k$ times the index $k$ and $q_k$ times the index $m+k$, for all $k\in\{1,\dots,m\}$.

We make use of Fa\`a di Bruno's formula~\cite{hardy2006combinatorics} in order to expand the product of partial derivatives and we obtain
\be
T_{\bm p,\bm q}(V,D)=(-1)^{|\bm p|+|\bm q|}\sum_{\pi\in\Pi(\mathcal E_{\bm p,\bm q})}\prod_{B\in\pi}{\left(\frac{\partial^{|B|}}{\prod_{j\in B}\partial\tilde\beta_j}\right)}\left[\frac12\bm{\tilde\beta}^TV\bm{\tilde\beta}+D^T\bm{\tilde\beta}\right]\Bigg\rvert_{\bm{\tilde\beta}=\bm0},
\label{Faadi2}
\ee
where $\Pi(\mathcal E_{\bm p,\bm q})$ denotes the set of all partitions of the multiset $\mathcal E_{\bm p,\bm q}$, and where the product runs over the blocks $B$ of the partition $\pi\in\Pi(\mathcal E_{\bm p,\bm q})$, with $|B|$ the size of the block. The function $\bm{\tilde\beta}^\dag V\bm{\tilde\beta}+D^\dag\bm{\tilde\beta}$ is a sum of a quadratic and a linear functions, so all derivatives of order greater than $2$ in the sum vanish. We thus have
\be
\ba
T_{\bm p,\bm q}(V,D)&=(-1)^{|\bm p|+|\bm q|}\sum_{\pi\in\Pi_{1,2}(\mathcal E_{\bm p,\bm q})}\prod_{B\in\pi}{\left(\frac{\partial^{|B|}}{\prod_{j\in B}\partial\tilde\beta_j}\right)}\left[\frac12\bm{\tilde\beta}^TV\bm{\tilde\beta}+D^T\bm{\tilde\beta}\right]\Bigg\rvert_{\bm{\tilde\beta}=\bm0}\\
&=(-1)^{|\bm p|+|\bm q|}\sum_{\pi\in\Pi_{1,2}(\mathcal E_{\bm p,\bm q})}\prod_{\{i,j\}\in\pi}{\left(\frac{\partial^2}{\partial\tilde\beta_i\partial\tilde\beta_j}\right)}\left[\frac12\bm{\tilde\beta}^TV\bm{\tilde\beta}+D^T\bm{\tilde\beta}\right]\Bigg\rvert_{\bm{\tilde\beta}=\bm0}\\
&\quad\quad\quad\quad\quad\quad\quad\quad\quad\quad\quad\quad\times\prod_{\{k\}\in\pi}{\left(\frac{\partial}{\partial\tilde\beta_k}\right)}\left[\frac12\bm{\tilde\beta}^TV\bm{\tilde\beta}+D^T\bm{\tilde\beta}\right]\Bigg\rvert_{\bm{\tilde\beta}=\bm0},
\ea
\ee
where $\Pi_{1,2}(\mathcal E_{\bm p,\bm q})$ denotes the set of all partitions of the multiset $\mathcal E_{\bm p,\bm q}$ in subsets of size $1$ and $2$. All derivatives of order $2$ of the linear term vanish, and all derivatives of order $1$ of the quadratic term vanish when evaluated at $\bm{\tilde\beta}=\bm0$. We thus obtain
\be
T_{\bm p,\bm q}(V,D)=(-1)^{|\bm p|+|\bm q|}\sum_{\pi\in\Pi_{1,2}(\mathcal E_{\bm p,\bm q})}\prod_{\{i,j\}\in\pi}{\left(\frac{\partial^2}{\partial\tilde\beta_i\partial\tilde\beta_j}\right)}\left[\frac12\bm{\tilde\beta}^TV\bm{\tilde\beta}\right]\Bigg\rvert_{\bm{\tilde\beta}=\bm0}\prod_{\{k\}\in\pi}{\left(\frac{\partial}{\partial\tilde\beta_k}\right)}\left[D^T\bm{\tilde\beta}\right]\Bigg\rvert_{\bm{\tilde\beta}=\bm0}.
\ee
Writing $V=(v_{ij})_{1\le i,j\le2m}$, with $V=V^T$, and $D=(d_k)_{1\le k\le2m}$ we obtain
\be
T_{\bm p,\bm q}(V,D)=(-1)^{|\bm p|+|\bm q|}\sum_{\pi\in\Pi_{1,2}(\mathcal E_{\bm p,\bm q})}{\prod_{\{i,j\}\in\pi}{v_{ij}}\prod_{\{k\}\in\pi}{d_k}}.
\label{TmnVDsum}
\ee
We now show that this expression may be rewritten as the loop hafnian of a matrix of size $|\bm p|+|\bm q|$. 
Define $V_{\bm p,\bm q}$ the $(|\bm p|+|\bm q|)\times(|\bm p|+|\bm q|)$ matrix obtained from $V$ by repeating $p_k$ times its $k^{th}$ rows and columns and $q_k$ times its $(m+k)^{th}$ rows and columns, for $k\in\{1,\dots,m\}$. Similarly, define $D_{\bm p,\bm q}$ the column vector of size $|\bm p|+|\bm q|$ obtained from $D$ by repeating $p_k$ times its $k^{th}$ element and $q_k$ times its $(m+k)^{th}$ element, for $k\in\{1,\dots,m\}$. Finally, let $A_{\bm p,\bm q}(V,D)=(a_{ij})_{1\le i,j\le|\bm p|+|\bm q|}$ be the $(|\bm p|+|\bm q|)\times(|\bm p|+|\bm q|)$ matrix obtained from $V_{\bm p,\bm q}$ by replacing its diagonal with the vector $D_{\bm p,\bm q}$. Then, Eq.~(\ref{TmnVDsum}) rewrites
\be
\ba
T_{\bm p,\bm q}(V,D)&=(-1)^{|\bm p|+|\bm q|}\sum_{\pi\in\Pi_{1,2}(\{1,\dots,|\bm p|+|\bm q|\})}{\prod_{\{i,j\}\in\pi}{a_{ij}}\prod_{\{k\}\in\pi}{a_{kk}}}\\
&=(-1)^{|\bm p|+|\bm q|}\sum_{M\in\text{SMP}(|\bm p|+|\bm q|)}{\prod_{\{i,j\}\in M}{a_{ij}}}\\
&=(-1)^{|\bm p|+|\bm q|}\lHaf\left[A_{\bm p,\bm q}(V,D)\right],
\ea
\label{lHafA}
\ee
where the sum in the first line is over the partitions of $\{1,\dots,|\bm p|+|\bm q|\}$ in subsets of size $1$ and $2$, where the sum in the second line is over the single pair matchings of the set $\{1,\dots,|\bm p|+|\bm q|\}$ and where the third line comes from the definition of the loop hafnian in Eq.~(\ref{lHaf}). 

\end{proof}

\noindent Let us illustrate with an example how the matrix $A_{\bm p,\bm q}(V,D)$ appearing in Lemma~\ref{lem:efficientC} is constructed from the matrix $V$ and the vector $D$. Let us set $m=2$, $\bm p=(2,0)$ and $\bm q=(1,0)$. We write
\be
V=\begin{pmatrix}
v_{11}&v_{12}&v_{13}&v_{14}\\
v_{21}&v_{22}&v_{23}&v_{24}\\
v_{31}&v_{32}&v_{33}&v_{34}\\
v_{41}&v_{42}&v_{43}&v_{44}
\end{pmatrix}\quad\text{and}\quad D=\begin{pmatrix}
d_1\\
d_2\\
d_3\\
d_4
\end{pmatrix}.
\ee
We first build the matrix $V_{\bm p,\bm q}$ by repeating $p_k$ times the $k^{th}$ row and column of $V$ and $q_k$ times the $(m+k)^{th}$ row and column. In that case, $\bm p=(p_1,p_2)=(2,0)$, so we repeat $2$ times the first row and column and discard the second row and column, and $\bm q=(q_1,q_2)=(1,0)$, so we keep the third row and column and discard the fourth row and column, obtaining the $3\times3$ matrix
\be
V_{\bm p,\bm q}=\begin{pmatrix}
v_{11}&v_{11}&v_{13}\\
v_{11}&v_{11}&v_{13}\\
v_{31}&v_{31}&v_{33}
\end{pmatrix}.
\ee
Similarly, we obtain the vector $D_{\bm p,\bm q}$ by repeating $p_k$ times the $k^{th}$ element of $D$ and $q_k$ times the $(m+k)^{th}$ element, as
\be
D_{\bm p,\bm q}=\begin{pmatrix}
d_1\\
d_1\\
d_3
\end{pmatrix}.
\ee
Finally, we replace the diagonal of $V_{\bm p,\bm q}$ by $D_{\bm p,\bm q}$:
\be
A_{\bm p,\bm q}(V,D)=\begin{pmatrix}
d_1&v_{11}&v_{13}\\
v_{11}&d_1&v_{13}\\
v_{31}&v_{31}&d_3
\end{pmatrix}.
\ee
In this construction by repeating rows and columns, the first index denotes which rows and columns are repeated for indices in $\{1,\dots,m\}$, while the second index denotes which rows and columns are repeated for indices in $\{m+1,\dots,2m\}$.

Combining Lemma~\ref{lem:efficientC} with phase space formalism and properties of Gaussian states, we are now ready to prove Theorem~\ref{th:Pr}:

\begin{proof}

The Gaussian circuit is composed of a Gaussian unitary $\hat G$ and balanced heterodyne detection. The output probability density reads, for all $\bm\alpha=(\alpha_1,\dots,\alpha_m)\in\mathbb C^m$,
\be
\ba
\text{Pr}_{\text{core}}[\bm\alpha]&=\Tr\left[\hat G\ket{\bm C}\!\bra{\bm C}\hat G^\dag\Pi_{\bm\alpha}\right]\\
&=\frac1{\pi^m}\Tr\left[\hat G^\dag\ket{\bm\alpha}\!\bra{\bm\alpha}\hat G\ket{\bm C}\!\bra{\bm C}\right]\\
&=\int_{\bm\beta\in\mathbb C^m}{Q_{\hat G^\dag\ket{\bm\alpha}\!\bra{\bm\alpha}\hat G}(\bm\beta)P_{\ket{\bm C}\!\bra{\bm C}}(\bm\beta)\,d^m\!\bm\beta\,d^m\!\bm\beta^*},
\ea
\label{PrIPAG1}
\ee
where $\Pi_{\bm\alpha}=\frac1{\pi^m}\ket{\bm\alpha}\!\bra{\bm\alpha}$ is the POVM element corresponding to the heterodyne detection of $\bm\alpha=(\alpha_1,\dots,\alpha_m)$. The state $\hat G^\dag\ket{\bm\alpha}$ is a Gaussian state: let $\bm V$ be its covariance matrix and $\bm d$ its displacement vector. For all $\bm\gamma\in\mathbb C^m$, we write $\bm{\tilde\gamma}=(\gamma_1,\dots,\gamma_m,\gamma_1^*,\dots,\gamma_m^*)$. Then, for all $\bm\beta\in\mathbb C^m$,
\be
\ba
Q_{\hat G^\dag\ket{\bm\alpha}\!\bra{\bm\alpha}\hat G}(\bm\beta)&=\frac1{\pi^m\sqrt{\Det\,(\bm V+\mathbb1_{2m}/2)}}\exp\left[-\frac12(\bm{\tilde\beta}-\bm{\tilde d})^\dag\left(\bm V+\mathbb1_{2m}/2\right)^{-1}(\bm{\tilde\beta}-\bm{\tilde d})\right]\\
&=\frac{\exp\left[-\frac12\bm{\tilde d}^\dag\left(\bm V+\mathbb1_{2m}/2\right)^{-1}\bm{\tilde d}\right]}{\pi^m\sqrt{\Det\,(\bm V+\mathbb1_{2m}/2)}}\exp\left[-\frac12\bm{\tilde\beta}^\dag\left(\bm V+\mathbb1_{2m}/2\right)^{-1}\bm{\tilde\beta}+\bm{\tilde d}^\dag\left(\bm V+\mathbb1_{2m}/2\right)^{-1}\bm{\tilde\beta}\right],
\ea
\label{QGaussian}
\ee
i.e., it is a Gaussian function which can be computed efficiently. On the other hand, we have
\be
\ket{\bm C}\!\bra{\bm C}=\sum_{\substack{\bm p,\bm q\in\mathbb N^m\\|\bm p|\le n,|\bm q|\le n}}{c_{\bm p}c_{\bm q}^*\ket{\bm p}\!\bra{\bm q}},
\ee
so that
\be
P_{\ket{\bm C}\!\bra{\bm C}}(\bm\beta)=\sum_{\substack{\bm p,\bm q\in\mathbb N^m\\|\bm p|\le n,|\bm q|\le n}}{c_{\bm p}c_{\bm q}^*P_{\ket{\bm p}\!\bra{\bm q}}(\bm\beta)},
\label{Pcore}
\ee
for all $\bm\beta\in\mathbb C^m$. Moreover we have, for all $\bm p,\bm q\in\mathbb N^m$ and all $\bm\beta\in\mathbb C^m$,
\be
\ba
P_{\ket{\bm p}\!\bra{\bm q}}(\bm\beta)&=\frac{e^{\|\bm\beta\|^2}}{\sqrt{\bm p!\bm q!}}\left(\frac{\partial}{\partial\bm\beta}\right)^{\bm p}\left(\frac{\partial}{\partial\bm\beta^*}\right)^{\bm q}\delta^{2m}(\bm\beta,\bm\beta^*)\\
&=\frac{e^{\frac12\bm{\tilde\beta}^\dag\bm{\tilde\beta}}}{\sqrt{\bm p!\bm q!}}\left(\frac{\partial}{\partial\bm\beta}\right)^{\bm p}\left(\frac{\partial}{\partial\bm\beta^*}\right)^{\bm q}\delta^{2m}(\bm\beta,\bm\beta^*),
\ea
\label{Ppq}
\ee
where $\delta^{2m}(\bm\beta,\bm\beta^*)=\delta(\beta_1)\cdots\delta(\beta_m)\,\delta(\beta_1^*)\cdots\delta(\beta_m^*)$. Combining Eqs.~(\ref{QGaussian}), (\ref{Pcore}) and (\ref{Ppq}) with Eq.~(\ref{PrIPAG1}) we obtain
\be
\ba
\text{Pr}_{\text{core}}[\bm\alpha]&=\kappa(\bm\alpha,\hat G)\sum_{\substack{\bm p,\bm q\in\mathbb N^m\\|\bm p|\le n,|\bm q|\le n}}\frac{c_{\bm p}c_{\bm q}^*}{\sqrt{\bm p!\bm q!}}\int_{\bm\beta\in\mathbb C^m}\Bigg\{\exp\left[-\frac12\bm{\tilde\beta}^\dag\left(\bm V+\mathbb1_{2m}/2\right)^{-1}\bm{\tilde\beta}\right]\\
&\times\exp\left[\bm{\tilde d}^\dag\left(\bm V+\mathbb1_{2m}/2\right)^{-1}\bm{\tilde\beta}\right]e^{\frac12\bm{\tilde\beta}^\dag\bm{\tilde\beta}}\left(\frac{\partial}{\partial\bm\beta}\right)^{\bm p}\left(\frac{\partial}{\partial\bm\beta^*}\right)^{\bm q}\delta^{2m}(\bm\beta,\bm\beta^*)\Bigg\}\,d^m\!\bm\beta\,d^m\!\bm\beta^*,
\ea
\label{PrIPAG2}
\ee
where we have set
\be
\kappa(\bm\alpha,\hat G)=\frac{\exp\left[-\frac12\bm{\tilde d}^\dag\left(\bm V+\mathbb1_{2m}/2\right)^{-1}\bm{\tilde d}\right]}{\pi^m\sqrt{\Det\,(\bm V+\mathbb1_{2m}/2)}}.
\ee
Given that
\be
\bm{\tilde\beta}^\dag=\bm{\tilde\beta}^T\begin{pmatrix} \mymathbb0_m & \mathbbm1_m \\ \mathbb1_m & \mymathbb0_m\end{pmatrix},
\ee
for all $\bm\beta\in\mathbb C^m$, the integral terms in Eq.~(\ref{PrIPAG2}) rewrite as
\be
\int_{\bm\beta\in\mathbb C^m}{\exp\left[\frac12\bm{\tilde\beta}^TV\bm{\tilde\beta}+D^T\bm{\tilde\beta}\right]\left(\frac{\partial}{\partial\bm\beta}\right)^{\bm p}\left(\frac{\partial}{\partial\bm\beta^*}\right)^{\bm q}\delta^{2m}(\bm\beta,\bm\beta^*)\,d^m\!\bm\beta\,d^m\!\bm\beta^*},
\label{PrIPAG3}
\ee
for $|\bm p|\le n$ and $|\bm q|\le n$, where
\be
V=\begin{pmatrix}\mymathbb0_m & \mathbbm1_m\\ \mathbb1_m & \mymathbb0_m \end{pmatrix}\left[\mathbb1_{2m}-\left(\bm V+\mathbb1_{2m}/2\right)^{-1}\right]
\label{V}
\ee
is a $2m\times 2m$ symmetric matrix, due to the initial structure of the covariance matrix, and where
\be
D=\left[\bm{\tilde d}^\dag\left(\bm V+\mathbb1_{2m}/2\right)^{-1}\right]^T
\label{D}
\ee
is a column vector of size $2m$. By Lemma~\ref{lem:efficientC}, the terms in Eq.~(\ref{PrIPAG3}) are equal to
\be
(-1)^{|\bm p|+|\bm q|}\lHaf\left(A_{\bm p,\bm q}\right),
\ee
where the square matrices $A_{\bm p,\bm q}$ of size $|\bm p|+|\bm q|$ are obtained from $V$ by repeating its entries according to $\bm p$ and $\bm q$ and replacing the diagonal by the corresponding elements of $D$ (see the example following Lemma~\ref{lem:efficientC} for a detailed description of the construction). With Eq.~(\ref{PrIPAG2}) we finally obtain
\be
\text{Pr}_{\text{core}}[\bm\alpha]=\kappa(\bm\alpha,\hat G)\sum_{\substack{\bm p,\bm q\in\mathbb N^m\\|\bm p|\le n,|\bm q|\le n}}{\frac{(-1)^{|\bm p|+|\bm q|}}{\sqrt{\bm p!\bm q!}}c_{\bm p}c_{\bm q}^*\lHaf\left(A_{\bm p,\bm q}\right)},
\ee
where
\be
\kappa(\bm\alpha,\hat G)=\frac{\exp\left[-\frac12\bm{\tilde d}^\dag\left(\bm V+\mathbb1_{2m}/2\right)^{-1}\bm{\tilde d}\right]}{\pi^m\sqrt{\Det\,(\bm V+\mathbb1_{2m}/2)}},
\ee
where $\bm V$ and $\bm d$ are the covariance matrix and the diplacement vector of the Gaussian state $\hat G^\dag\ket{\bm\alpha}$, respectively.

\end{proof}

%--------------------------------------------------------------------------------

\section{Proof of Theorem~\ref{th:strong}}
\label{app:thstrong}

\begin{proof}

By Theorem~\ref{th:Pr}, up to an efficiently computable prefactor, the output probability density is a sum of $s^2$ loop hafnians, where $s$ is the support size of the input core state. The loop hafnian of a matrix of size $r$ may be computed in time $O(r^32^{r/2})$~\cite{bjorklund2019faster}. For $|\bm p|\le n$ and $|\bm q|\le n$, the matrices $A_{\bm p,\bm q}$ appearing in Eq.~(\ref{Pr}) are efficiently computable square matrices of size $|\bm p|+|\bm q|\le2n$, so all the loop hafnians may be computed in time $O(n^32^n)$. Hence, the output probability density can be evaluated in time $O(s^2n^32^n+\poly m)$.

We now consider the evaluations of the marginal probability densities. Let $k\in\{1,\dots,m-1\}$, for all $\bm\alpha=(\alpha_1,\dots,\alpha_k)\in\mathbb C^k$ we have
\begin{align}
\nonumber \text{Pr}_{\text{core}}[\bm\alpha]&=\Tr\left[\hat G\ket{\bm C}\!\bra{\bm C}\hat G^\dag\left(\Pi_{\bm\alpha}\otimes\mathbb 1_{m-k}\right)\right]\\ 
&=\frac1{\pi^k}\Tr\left[\hat G^\dag\left(\ket{\bm\alpha}\!\bra{\bm\alpha}\otimes\mathbb 1_{m-k}\right)\hat G\ket{\bm C}\!\bra{\bm C}\right]\\
\nonumber &=\pi^{m-k}\int_{\bm\beta\in\mathbb C^m}{Q_{\hat G^\dag\left(\ket{\bm\alpha}\!\bra{\bm\alpha}\otimes\mathbb 1_{m-k}\right)\hat G}(\bm\beta)\,P_{\ket{\bm C}\!\bra{\bm C}}(\bm\beta)\,d^m\!\bm\beta\,d^m\!\bm\beta^*},
\end{align}
where $\Pi_{\bm\alpha}=\frac1{\pi^k}\ket{\alpha_1,\dots,\alpha_k}\!\bra{\alpha_1,\dots,\alpha_k}$ is the POVM element corresponding to the heterodyne detection of $(\alpha_1,\dots,\alpha_k)$ over the first $k$ modes. With Lemma~\ref{lem:efficientC} and the proof of Theorem~\ref{th:Pr}, it is sufficient to show that $Q_{\hat G^\dag\left(\ket{\bm\alpha}\!\bra{\bm\alpha}\otimes\mathbb 1_{m-k}\right)\hat G}$ is an efficiently computable Gaussian function in order to prove that the marginal probability density can be evaluated in time $O(s^2n^32^n+\poly m)$. 

For all $(\alpha_1,\dots,\alpha_k)\in\mathbb C^k$ and all $(\gamma_1,\dots,\gamma_{m-k})\in\mathbb C^{m-k}$ we write $\bm\alpha=(\alpha_1,\dots,\alpha_k,0,\dots,0)\in\mathbb C^m$ and $\bm\gamma=(0,\dots,0,\gamma_1,\dots,\gamma_{m-k})\in\mathbb C^m$ so that $\bm\alpha+\bm\gamma=(\alpha_1,\dots,\alpha_k,\gamma_1,\dots,\gamma_{m-k})\in\mathbb C^m$. Using the overcompleteness of coherent states we obtain, for all $(\alpha_1,\dots,\alpha_k)\in\mathbb C^k$ and for all $\bm\beta\in\mathbb C^m$,
\be
\pi^{m-k}Q_{\hat G^\dag\left(\ket{\bm\alpha}\!\bra{\bm\alpha}\otimes\mathbb 1_{m-k}\right)\hat G}(\bm\beta)=\int_{\bm\gamma=(\gamma_1,\dots,\gamma_{m-k})\in\mathbb C^{m-k}}{Q_{\hat G^\dag\ket{\bm\alpha+\bm\gamma}\!\bra{\bm\alpha+\bm\gamma}\hat G}(\bm\beta)\,d^{m-k}\bm\gamma d^{m-k}\bm\gamma^*}.
\label{integralQ}
\ee
Let $S$ and $\bm{\tilde d}=(\bm d,\bm d^*)$ be the symplectic matrix and the displacement vector associated with the Gaussian unitary $\hat G^\dag$. The Gaussian state
\be
\hat G^\dag\ket{\alpha_1,\dots,\alpha_k,\gamma_1,\dots,\gamma_{m-k}}=\hat G^\dag\ket{\bm\alpha+\bm\gamma}
\ee
is described by the covariance matrix $\bm V=\frac12SS^\dag$ and the displacement vector $S(\bm{\tilde\alpha}+\bm{\tilde\gamma})+\bm{\tilde d}$. Its $Q$ function is thus given by
\be
Q_{\hat G^\dag\ket{\bm\alpha+\bm\gamma}\!\bra{\bm\alpha+\bm\gamma}\hat G}(\bm\beta)=\frac{\exp\left[-\frac12(\bm{\tilde\beta}-S(\bm{\tilde\alpha}+\bm{\tilde\gamma})-\bm{\tilde d})^\dag\left(\bm V+\mathbb1_{2m}/2\right)^{-1}(\bm{\tilde\beta}-S(\bm{\tilde\alpha}+\bm{\tilde\gamma})-\bm{\tilde d})\right]}{\pi^m\sqrt{\Det\,(\bm V+\mathbb1_{2m}/2)}},
\ee
for all $(\alpha_1,\dots,\alpha_k)\in\mathbb C^k$, for all $(\gamma_1,\dots,\gamma_{m-k})\in\mathbb C^{m-k}$ and for all $\bm\beta\in\mathbb C^m$. Let us discard the efficiently computable denominator and expand the product in the exponential. Writing $M=\left(\bm V+\mathbb1_{2m}/2\right)^{-1}$, we are left with
\be
\exp\left[-\frac12(\bm{\tilde\beta}-S\bm{\tilde\alpha}-\bm{\tilde d})^\dag M(\bm{\tilde\beta}-S\bm{\tilde\alpha}-\bm{\tilde d})\right]\cdot\exp\left[-\frac12\bm{\tilde\gamma}^\dag S^\dag MS\bm{\tilde\gamma}+(\bm{\tilde\beta}-S\bm{\tilde\alpha}-\bm{\tilde d})^\dag MS\bm{\tilde\gamma}\right],
\ee
The first exponential term is an efficiently computable Gaussian function which factors out of the integral in Eq.~(\ref{integralQ}). Rewriting Eq.~(\ref{integralQ}) up to this efficiently computable Gaussian function we are left with
\be
\ba
\int_{\bm\gamma=(0,\dots,0,\gamma_1,\dots,\gamma_{m-k})\in\mathbb C^m}&{\exp\left[-\frac12\bm{\tilde\gamma}^\dag S^\dag MS\bm{\tilde\gamma}+(\bm{\tilde\beta}-S\bm{\tilde\alpha}-\bm{\tilde d})^\dag MS\bm{\tilde\gamma}\right]d^{m-k}\bm\gamma d^{m-k}\bm\gamma^*}\\
&=\int_{\bm\gamma=(\gamma_1,\dots,\gamma_{m-k})\in\mathbb C^{m-k}}{\exp\left[-\frac12\bm{\tilde\gamma}^TV\bm{\tilde\gamma}+D^T\bm{\tilde\gamma}\right]d^{2(m-k)}\bm{\tilde\gamma}},
\ea
\ee
where $V$ is the $2(m-k)\times2(m-k)$ submatrix of
\be
\begin{pmatrix} \mymathbb0_m & \mathbbm1_m \\ \mathbb1_m & \mymathbb0_m \end{pmatrix}S^\dag MS
\ee
obtained by removing the rows and colums of indices $l$ and $m+l$ for $l\in\{1,\dots,k\}$, and where $D$ is the column vector of size $2(m-k)$ obtained by removing the elements of
\be
\left[(\bm{\tilde\beta}-S\bm{\tilde\alpha}-\bm{\tilde d})^\dag MS\right]^T
\ee
of indices $l$ and $m+l$ for $l\in\{1,\dots,k\}$. The matrix $V$ and the vector $D$ are efficiently computable. Moreover,
\be
\int_{\bm\gamma=(\gamma_1,\dots,\gamma_{m-k})\in\mathbb C^{m-k}}{\exp\left[-\frac12\bm{\tilde\gamma}^TV\bm{\tilde\gamma}+D^T\bm{\tilde\gamma}\right]d^{2(m-k)}\bm{\tilde\gamma}}=\frac{(2\pi)^{m-k}}{\sqrt{\Det\,(V)}}\exp\left[\frac12D^TV^{-1}D\right],
\ee
which is an efficiently computable Gaussian function of $\bm\beta$.

This implies that the value of the marginal probability density $\Pr\,[\alpha_1,\dots,\alpha_k]$ may be computed in time $O(s^2n^32^n+\poly m)$. Moreover, it is clear that this does not depent on the choice of $k\in\{1,\dots,m-1\}$ and on the choice of the modes. Hence, all marginal probability densities may be evaluated in time $O(s^2n^32^n+\poly m)$.

\end{proof}

%--------------------------------------------------------------------------------

\section{Proof of Lemma~\ref{lem:multiGconv}}
\label{app:multiGconv}

\begin{proof}

The inclusion is immediate with Eq.~(\ref{IPAGoutput}). Up to the Gaussian unitary, it is sufficient to consider core states. To prove the strict inclusion, we show that the $m$-mode core state $(\ket{20}+\ket{01})\otimes\ket0^{\otimes m-2}$ (we omit normalisation), which has degree $2$, is not a core state of the form of Eq.~(\ref{multimodecore}).

By Eq.~(\ref{multimodecore}), all $m$-mode core states of IPAG circuits of degree $2$ have the form
\be
\left(d^{(1)}+\sum_{k=1}^m{s_k^{(1)}\hat a_k^\dag+s_{m+k}^{(1)}\hat a_k}\right)\left(d^{(0)}+\sum_{l=1}^m{s_l^{(0)}\hat a_l^\dag+s_{1,m+l}^{(0)}\hat a_l}\right)\ket0^{\otimes m},
\ee
for some complex numbers $d^{(0)},d^{(1)},s_1^{(0)},\dots,s_{2m}^{(0)},s_1^{(1)},\dots,s_{2m}^{(1)}$. This expression rewrites 
\be
\left(d^{(1)}+\sum_{k=1}^m{s_k^{(1)}\hat a_k^\dag+s_{m+k}^{(1)}\hat a_k}\right)\left(\sum_{l=1}^m{s_l^{(0)}\ket{\bm1_l}}+d^{(0)}\ket{\bm0}\right),
\ee
where for all $l\in\{1,\dots,m\}$, we write $\bm 1_l=(0,\dots,0,1,0\dots,0)$, with a $1$ at the $l^{th}$ position. We finally obtain
\be
\sqrt2\sum_{k=1}^m{s_k^{(0)}s_k^{(1)}\ket{\bm2_k}}+\sum_{\substack{k,l=1\\k\neq l}}^m{s_k^{(0)}s_l^{(1)}\ket{\bm1_k+\bm1_l}}+\sum_{k=1}^m{\left(d^{(1)}s_k^{(0)}+d^{(0)}s_k^{(1)}\right)\ket{\bm1_k}}+\left(d^{(0)}d^{(1)}+\sum_{k=1}^m{s_k^{(0)}s_{m+k}^{(1)}}\right)\ket{\bm0},
\label{IPAGcore}
\ee
where for all $k\in\{1,\dots,m\}$, we write $\bm 2_k=(0,\dots,0,2,0\dots,0)$, with a $2$ at the $k^{th}$ position.
On the other hand we have
\be
(\ket{20}+\ket{01})\otimes\ket0^{\otimes m-2}=\ket{\bm2_1}+\ket{\bm1_2}.
\label{weirdcore}
\ee
In order for this core state to be of the form of Eq.~(\ref{IPAGcore}) we must have
\be
\begin{cases}
s_1^{(0)}s_1^{(1)}\neq0\\
s_k^{(0)}s_l^{(1)}=0,\text{ for }k\neq l,
\end{cases}
\ee
by considering the first and second terms of Eq.~(\ref{IPAGcore}). This implies $s_k^{(0)}=s_k^{(1)}=0$ for all $k\neq1$. Hence, the coefficient of $\ket{\bm1_2}$ in Eq.~(\ref{IPAGcore}) is equal to $0$, while it is nonzero in Eq.~(\ref{weirdcore}). Therefore the core state described by Eq.~(\ref{weirdcore}) cannot be generated by an IPAG circuit.

\end{proof}

%--------------------------------------------------------------------------------

\end{document}